\def\BState{\State\hskip-\ALG@thistlm}
\theoremstyle{plain}
\newtheorem{thm}{Theorem}[section]
\newtheorem{prop}[thm]{Proposition}
\theoremstyle{plain}
\theoremstyle{remark}
\newcommand{\pdf}{\textit{pdf}}
\newcommand{\chf}{\textit{chf}}
\newcommand{\cdf}{\textit{cdf}}
\newcommand{\rv}{\textit{rv}}
\newcommand{\id}{\textit{id}}
\newcommand{\sd}{\textit{sd}}
\newcommand{\eqd}{\stackrel{d}{=}}
\newcommand{\arem}{$a$-remainder}
\newcommand{\BM}{\emph{BM}}
\newcommand{\Levy}{L\'{e}vy}
\title{A bivariate Normal Inverse Gaussian process with stochastic delay: efficient simulations and applications to energy markets}
\author{
	Matteo Gardini\thanks{Department of Mathematics, University of Genoa, Via Dodecaneso 16146, Genoa, Italy, email gardini@dima.unige.it} 
	\and 
	Piergiacomo Sabino\thanks{Quantitative Modelling E.ON SE
		Br\"usseler Platz 1, 45131 Essen, Germany, email piergiacomo.sabino@eon.com} 
	\and
	 Emanuela Sasso\thanks{Department of Mathematics, University of Genoa, Via Dodecaneso 16146, Genoa, Italy, email sasso@dima.unige.it}}
\date{\today}
\begin{document}

\maketitle


\begin{abstract}
Using the concept of self-decomposable subordinators introduced in \citet{gardini2020}, we build a new bivariate Normal Inverse Gaussian process that can capture stochastic delays. In addition, we also develop a novel path simulation scheme that relies on the mathematical connection between self-decomposable Inverse Gaussian laws and \Levy-driven Ornstein-Uhlenbeck processes with  Inverse Gaussian stationary distribution. We show that our approach provides 
an improvement to the existing simulation scheme detailed in \citet{Zhang2008} because it does not rely on an acceptance-rejection method.  

Eventually, these results are applied to the modelling of energy markets and to the pricing of spread options using the proposed Monte Carlo scheme and Fourier techniques.
\end{abstract}

\section{Introduction and preliminaries}
\label{sec:Intro}

The \citet{BLS1973} model is probably the most popular stochastic model used to describe the dynamics of financial indices. Even though it is well-known that it is not able to capture many  stylized facts, its simplicity and its flexibility often make it the standard choice for many financial applications. In the univariate setting several models have been proposed to overcome its limits, relying, for example, on more general \Levy\ processes. However, in a multi-market setting, the Black-Scholes model is still a milestone due to the fact that alternatives can be less mathematically tractable and their calibration can be computationally demanding. An attempt to combine tractability and simple calibration in a multivariate framework has been proposed by \citet{Semeraro2008} in the context of Variance Gamma (VG) processes, introduced by \citet{MadanSeneta90}  and extended to Normal Inverse Gaussian (NIG) processes, introduced by \citet{BN98}, in \citet{SL2010}. Each marginal of the multivariate process is built via Brownian subordination: the resulting subordinator is the sum of an independent subordinator and a  subordinator shared by all the components, both mutually independent. Therefore, the construction has a nice financial interpretation  in terms of idiosyncratic and systematic risks. An alternative approach to construct multidimensional \Levy\ processes including VG and NIG processes, has been proposed by \citet{BB2013} sharing the same logic of idiosyncratic and systematic risks.
Such models are able to capture some empirical facts such as discontinuities in price trajectories, volatility smiles and non-normality in log-returns, whereas the joint dependence is driven by the common systematic component.

On the other hand, the impact of new information in one market might require some time to be propagated onto dependent markets therefore, the aforementioned models cannot replicate any \textit{stochastic delay} or any \textit{synaptic risk} as named in \citet{NCPPS2018}. Indeed, it is not so rare to observe that the impact on other related markets occurs after a stochastic time delay. Nowadays, a clear example is offered by the recent pandemic disease of Covid-19: as one can see in Figure \ref{fig:covidImpact}, first blown cases appeared at the beginning of January 2020 in China leading to a big downward jump in Shangai's index after a flat period due to Chinese New Year celebrations and subsequently the virus spread all over the world. Italy registered first cases at the end of February, Brazil at the beginning of March  leading to a general drop in the whole world economy.

\begin{figure}
    \centering
    \includegraphics[scale=0.3]{./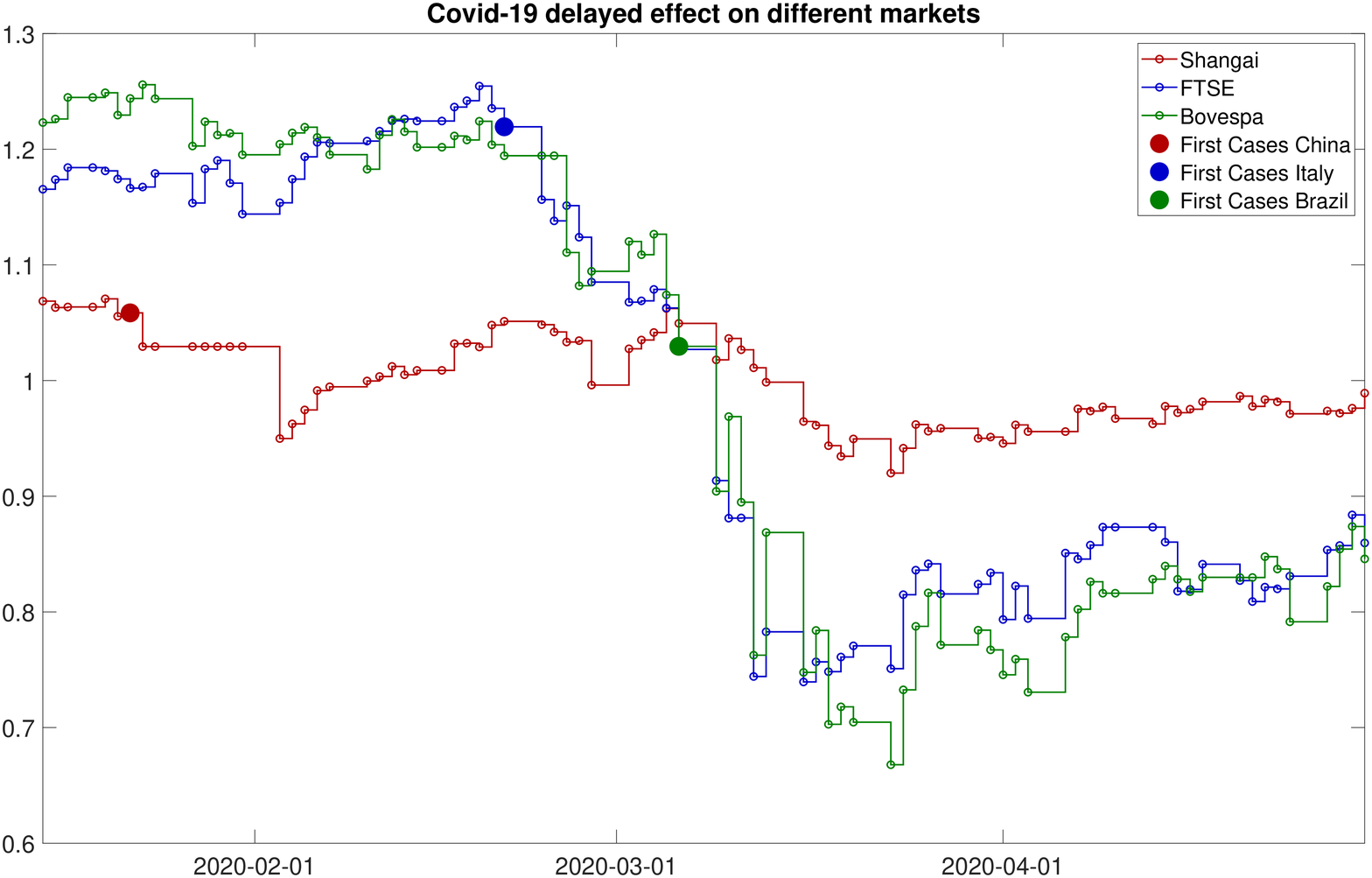}
    \caption{Impact of Covid-19 disease in some markets world wide.}
    \label{fig:covidImpact}
\end{figure}

Recently in \citet{gardini2020}  we have shown how the notion of self-decomposability (\sd) can be used to describe stochastic delays and to introduce synaptic risk in financial models.
We recall that the law of a \rv\ $X$ is said to be \sd\ (see \citet{Sato} and \citet{Cufaro08}) if for every $a \in \left(0,1\right)$ its characteristic function (\chf) $\phi\left(u\right)$ can be represented as 
\begin{equation}
\phi_{X}\left(u\right) = \phi_{X}\left(au\right)\chi_{a}\left(u\right).
\label{eqn:chfsdlaws}
\end{equation}
with $\chi_{a}\left(u\right)$  also a \chf. It means that   
we can always find two independent \rv's $Y$ (with the same law of $X$) and $Z_{a}$ such that, in distribution
\begin{equation}\label{eq:def:sd}
X \stackrel{d}{=} aY + Z_{a}
\nonumber
\end{equation}
where of course, $\chi_a(u)$ is the \chf\ of $Z_a$ (hereafter called the \arem\ of the law of $X$) whose law is infinitely divisible (\id) but not in general \sd\ (see \citet{Sato}).

Based on these facts, in \citet{gardini2020} we have introduced what we name \sd-\textit{subordinators} that are the  building blocks for the construction of correlated \Levy\ processes.
In analogy to \sd\ laws, such subordinators $H_{1}\left(t\right), H_{2}\left(t\right)$ are defined as follows 
\begin{equation}
    H_{2}\left(t\right) = aH_{1}\left(t\right) + Z_{a}\left(t\right)
    \label{eqn:subordinatorsSD}
\end{equation}
where $H_2(t)$ and $Z_a(t)$ are independent processes. The last  equation  is mathematically well-posed and  has a clear  interpretation: the stochastic time processes $H_{1}\left(t\right), H_{2}\left(t\right)$ \enquote{run together} with a stochastic delay $Z_{a}\left(t\right)$ that is controlled by $a$ that simply plays the role of the instantaneous correlation between $H_1(t)$ and $H_2(t)$.  When $a$ tends to $1$ then $H_{1}\left(t\right)$ and $H_{2}\left(t\right)$ become essentially indistinguishable. By subordinating Brownian motions (BM) with such subordinators we  construct a class of dependent processes that are at least marginally \Levy. This means that we can extend the approaches of \citet{Semeraro2008}, \citet{SL2010} and \citet{BB2013} to cover stochastic delays while keeping mathematical tractability, easy calibration and clear financial interpretation.

This study can be considered the sequel of \citet{gardini2020}, where now our main focus is on  bivariate \sd\ Inverse Gaussian (IG) suborndinators and on the construction of bivariate dependent NIG processes. The first contribution of this work is the derivation of closed form formulas for the linear correlation and the \chf\ of the last processes. These results are instrumental for the  calibration and the pricing of derivative contracts.  The pricing of complex derivative contracts is often accomplished via Monte Carlo (MC) simulations. To this end, a second contribution of this study consists of a novel and efficient algorithm to generate the \arem\ of IG laws and therefore to simulate the skeleton of $Z_a(t)$, of the \sd\ IG subordinators and of the bivariate NIG processes. As already observed among others in \citet{Taufer2009}, \citet{Sabino2020b} and \citet{cs20}, the transition law between $t$ and $t + \Delta t$ of a \Levy-driven Ornstein-Uhlenbeck (OU) $X(t)$ having a certain  stationary law coincides with that of the \arem\ of such a law by setting $a=e^{-\lambda \Delta t}$ where $\lambda$ is the mean-reversion rate of $X(t)$. Hence, the simulation of the \arem\ of a IG law is equivalent to the simulation of the skeleton of a IG-OU process, this last one having been illustrated in \citet{Zhang2008}. We show that our proposal is more efficient than that of \citet{Zhang2008}, because it does not rely on acceptance-rejection methods. Note that being $Z_a(t)$ a \Levy\ process, its simulation requires the same $a$ at all times $t$ while instead $a=e^{-\lambda \Delta t}$ depends on the time step $\Delta t$. 

Finally, we illustrate the applicability of the proposed bivariate \sd-NIG processes to energy markets and in particular to the pricing of spread options via MC simulations and Fourier techniques.

\par The article is organized as follow: Section \ref{sec:SDMODELS} introduces \sd-NIG processes and their mathematical properties. In Section \ref{sec:SimAlg} we describe the method to simulate $Z_{a}\left(t\right)$ and hence  $H_{1}\left(t\right),H_{2}\left(t\right)$ defined in Equation \eqref{eqn:subordinatorsSD}. In Section \ref{sec:NumRes} we apply the models described in Section \ref{sec:SDMODELS} to power and gas forward markets and to  the pricing of spread options with MC and Fourier Techniques.  Section \ref{sec:Conclusions} concludes the paper with an overview of future inquires and possible further applications.


\section{Self decomposable NIG process}
\label{sec:SDMODELS}

The NIG process is constructed via the subordination of a BM with an IG process. On the other hand, there are different characterizations of the \pdf\ of an IG law: we denote the notation using the parameter-setting $\left(\mu,\lambda\right)$, adopted for instance in \citet{CT2003}, with $IG_{T}\left(\mu,\lambda\right)$: within this setting $\mu>0$ is the mean and $\lambda>0$ is the shape parameter. On the other side, we refer to the original notation in \citet{BN97} with $IG_{B}\left(a,b\right)$: in this case $a >0$ and $b>0$ describe the scale and the shape of the distribution, respectively. In Appendix \ref{sec:Appendix} we give some details on how to switch from one to the other. In general, the $IG_{B}$ notation is convenient to analyze sums of IG \rv's, whereas $IG_{T}$ is more convenient to work with expectations and \chf.

\citet{Semeraro2008}, \citet{SL2010} and \citet{BB2013} proposed a simple technique to introduce dependence between \Levy\  processes: given three \Levy\ independent processes $X_{1}\left(t\right)$, $X_{2}\left(t\right)$ and $Z\left(t\right)$ and $a_{1},a_{2}\in \mathbb{R}$ one can set:
\begin{align*}
Y_{1}\left(t\right) = X_{1}\left(t\right) + a_{1}Z\left(t\right) \\
Y_{2}\left(t\right) = X_{2}\left(t\right) + a_{2}Z\left(t\right) \\
\end{align*}
The processes $Y_{1}\left(t\right)$ and $Y_{2}\left(t\right)$ are clearly dependent, because of the common process $Z\left(t\right)$. This idea can be applied  to different types of processes, included subordinators. The economic interpretation is clear: $Z\left(t\right)$ represents the \textit{systematic risk} whereas $X_{j}\left(t\right)$ models the \textit{idiosyncratic risk}. Of course, this simple construction can be applied to obtain multivariate VG and NIG processes. Nevertheless, as mentioned in the introduction, these settings cannot cover stochastic-delay and what we call \textit{synaptic risk}. In \citet{gardini2020} we detailed on the construction of bivariate \sd-VG processes, whereas in this sequel we focus on the bivariate version of \sd-NIG processes. In a nutshell, our approach consists of replacing the common and marginal-specific subordinators of \citet{Semeraro2008}, \citet{SL2010} and \citet{BB2013} with \sd-subordinators defined in \eqref{eqn:subordinatorsSD}.

\subsection{Semeraro \sd-NIG model}
\label{sec:NIGSemeraro}
In this subsection we illustrate the steps required to extend the model proposed by \citet{Semeraro2008} in order to cope with stochastic delay relying on the \sd\ subordinators of Equation \eqref{eqn:subordinatorsSD}.  

Let $I_{j}\left(t\right) \; j=1,2$ be independent subordinators, and $H_{1}\left(t\right)$,  $H_{2}\left(t\right)$ be \sd\ subordinators defined in \eqref{eqn:subordinatorsSD}, independent of $I_{j}\left(t\right)$. Define the subordinator $G_{j}\left(t\right)$

\begin{equation}
G_{j}\left(t\right) = I_{j}\left(t\right) + \alpha_{j}H_{j}\left(t\right), \quad j=1,2
\label{eqn:SSubordinators}
\end{equation}
with $\alpha_{j} \in \mathbb{R}^{+}$. Let now be $\mu_{j} \in \mathbb{R}$, $\sigma_{j} \in \mathbb{R}^{+}$ and $W_{j}\left(t\right)$ standard independent BM's, we define the subordinated \BM\ $Y_{j}\left(t\right)$ as:
\begin{equation}
Y_{j}\left(t\right) = \mu_{j}G_{j}\left(t\right) + \sigma_{j}W_{j}\left(G_{j}\left(t\right)\right), \quad j=1,2.
\label{eqn:subordinatedprocess}
\end{equation}

We remark that when $a$ in \eqref{eqn:subordinatorsSD} tends to $1$ there is no time delay and the synaptic risk coincides with the systematic risk as in the original approach of \citet{Semeraro2008}.  \\

\par A bivariate NIG process with IG \sd-subordinators can be defined starting from \eqref{eqn:SSubordinators} in the following way. Let be $\alpha_{j} =\gamma_{j}^{2}$ and let $I_{j}\left(t\right)$ and $H_{j}\left(t\right)$ be distributed as follows:
\begin{equation}
\begin{split}
I_{j}\left(t\right) & \sim  IG_{T}\left(\frac{A_{j}\gamma_{j}t}{B},A_{j}^{2}t^{2}\right) \\
H_{j}\left(t\right) &\sim IG_{T}\left(\frac{At}{B},A^{2}t^{2}\right)
\end{split}
\label{eqn:distosubsemeraro}
\end{equation}
and hence we get:
\begin{align*}
    G_{j}\left(t\right) \sim  IG_{T}\left(\frac{\left(A_{j}+A\gamma_{j}\right)\gamma_{j}t}{B},\left(A_{j}+A\gamma_{j}\right)^{2}t^{2}\right).
\end{align*}
Since $G\left(t\right)$ is a stochastic time, it is customary to require that $\mathbb{E}\left[G_j\left(t\right)\right]=t$: this condition can be easily fulfilled by imposing:

\begin{equation*}
A_{j} + A \gamma_{j} = \frac{B}{\gamma_{j}}.
\end{equation*}

\noindent Consequently, denoting with $k_{j}$ the variance of the subordinator $G\left(t\right)$ at time $t=1$, we have that:
\begin{equation*}
k_{j}\coloneqq Var\left[G_{j}\left(1\right)\right] = \frac{1}{\left(A_{j}+A\gamma_{j}\right)^{2}} = \frac{\gamma_{j}^{2}}{B^{2}}.
\end{equation*}
As observed in \citet{SL2010}, assuming $B=1$ is not restrictive: hence $k_{j}=\gamma_{j}^{2}$ and then $k_{j} = \alpha_{j}$. After simple calculations, one can find that the expression of the (instantaneous) linear correlation coefficient at time $t$ of the process $\boldsymbol{Y}\left(t\right) = \left(Y_{1}\left(t\right),Y_{2}\left(t\right)\right)$ is:

\begin{equation}
\rho_{\left(Y_{1}\left(t\right),Y_{2}\left(t\right)\right)} = \frac{\mu_{1}\mu_{2}\alpha_{1}\alpha_{2}a A}{\sqrt{\sigma_{1}^{2} + \mu_{1}^{2}\alpha_{1}}\sqrt{\sigma_{2}^{2} + \mu_{2}^{2}\alpha_{2}}}
\label{eqn:linCorrS}
\end{equation}
Compared to the formula of the linear coefficient in \citet{Semeraro2008} the equation above has an additional parameter $a$ that tunes the stochastic delay.  

Finally, the \chf\ of $\boldsymbol{Y}\left(t\right)$ is given by the following proposition.

\begin{prop}
\label{cor:2DNIGchfSemeraro}
Denote $\phi\left(u;\mu,\lambda\right)$ the \chf\ of a  \rv\ distributed according to a $IG_{T}\left(\mu,\lambda\right)$ law then
the joint \chf\ at time $t$ of $\boldsymbol{Y}\left(t\right)$ of Equation \eqref{eqn:subordinatedprocess}, where $H_{j}\left(t\right)$ and $I_{j}\left(t\right)$ are distributed as in \eqref{eqn:distosubsemeraro} for $j=1,2$, is:
\begin{equation}
\begin{split}
\phi_{\boldsymbol{Y}\left(t\right)}\left(\boldsymbol{u}\right) = &\phi_{I_{1}\left(t\right)}\left(u_{1}\mu_{1} + i \frac{\sigma_{1}^{2} u_{1}^{2}}{2}\right) 
\phi_{I_{2}\left(t\right)}\left(u_{2}\mu_{2} + i \frac{\sigma_{2}^{2} u_{2}^{2}}{2}\right)
\phi_{Z_{a}\left(t\right)}\left(u_{2}\mu_{2} + i \frac{\sigma_{2}^{2} u_{2}^{2}}{2}\right) \\
& \phi_{H_{1}\left(t\right)}\left(\alpha_{1}\left(u_{1}\mu_{1} + i \frac{\sigma_{1}^{2} u_{1}^{2}}{2}\right) + a \alpha_{2}\left(u_{2}\mu_{2} + i \frac{\sigma_{2}^{2} u_{2}^{2}}{2}\right)\right)
\end{split}
\label{eqn:generalchf}
\end{equation}
where 

\begin{equation}
\begin{split}
    \phi_{H_{j}\left(t\right)}\left(u\right) &= \phi\left(u;At,A^{2}t^{2}\right),   \quad j=1,2 \\
    \phi_{I_{j}\left(t\right)}\left(u\right) &= \phi\left(u;A_{j}t\gamma_{j},A_{j}^{2}t^{2}\right),  \quad j=1,2 \\
    \phi_{Z_{a}\left(t\right)}\left(u \right) & = \frac{\phi\left(u;At,A^{2}t^{2}\right)}{\phi\left(au;At,A^{2}t^{2}\right)}
    \label{eqn:chfNigSem}
\end{split}
\end{equation}
\end{prop}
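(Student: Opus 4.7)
The plan is to compute $\phi_{\boldsymbol{Y}(t)}(\boldsymbol{u})$ by conditioning on the pair of subordinators $(G_1(t),G_2(t))$. Since $W_1,W_2$ are independent standard Brownian motions and are independent of all the subordinators, given $(G_1(t),G_2(t))$ the variables $Y_1(t),Y_2(t)$ are conditionally independent Gaussians with $Y_j(t)\mid G_j(t)\sim \mathcal{N}(\mu_j G_j(t),\sigma_j^2 G_j(t))$. Taking the conditional \chf\ of each coordinate and applying the tower property gives
$$\phi_{\boldsymbol{Y}(t)}(u_1,u_2)=\mathbb{E}\bigl[\exp(i z_1 G_1(t)+i z_2 G_2(t))\bigr],\qquad z_j\coloneqq u_j\mu_j+i\tfrac{\sigma_j^2 u_j^2}{2},$$
so the task reduces to the joint \chf\ of the subordinator pair evaluated at the complex arguments $(z_1,z_2)$.

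Next I would substitute $G_j(t)=I_j(t)+\alpha_j H_j(t)$ from \eqref{eqn:SSubordinators}. Because $I_1,I_2$ are mutually independent and also independent of $(H_1,H_2)$, the expectation factors as $\phi_{I_1(t)}(z_1)\,\phi_{I_2(t)}(z_2)$ times $\mathbb{E}[\exp(i\alpha_1 z_1 H_1(t)+i\alpha_2 z_2 H_2(t))]$. For the remaining factor I use the sd-subordinator identity $H_2(t)=aH_1(t)+Z_a(t)$ from \eqref{eqn:subordinatorsSD} with $H_1(t)$ and $Z_a(t)$ independent: substituting and regrouping the $H_1(t)$ terms yields $\phi_{H_1(t)}(\alpha_1 z_1+a\alpha_2 z_2)\,\phi_{Z_a(t)}(\alpha_2 z_2)$. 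Collecting the four factors produces the product structure of \eqref{eqn:generalchf}.

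The explicit IG formulas in \eqref{eqn:chfNigSem} then follow directly from the distributional assumptions \eqref{eqn:distosubsemeraro} and from solving the \sd\ identity \eqref{eqn:chfsdlaws} for $\chi_a$: $\phi_{Z_a(t)}(u)=\phi_{H_2(t)}(u)/\phi_{H_1(t)}(au)$, which collapses to the stated ratio of IG \chf's because $H_1(t)$ and $H_2(t)$ share the same $IG_T(At,A^2t^2)$ law. There is no deep obstacle here: the argument is conditional Gaussian integration plus the \sd\ decomposition. The one subtle point that deserves care is the argument of $\phi_{Z_a(t)}$ in the statement; my derivation naturally produces $\phi_{Z_a(t)}(\alpha_2 z_2)$, so to match the printed form one should either invoke the IG scaling property $cX\sim IG_T(c\mu,c\lambda)$ to absorb $\alpha_2$ into the \arem\ of $\alpha_2 H_2(t)$ or make the $\alpha_2$ factor explicit.
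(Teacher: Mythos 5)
Your proof is correct and follows essentially the same route as the paper's: condition on $(G_1(t),G_2(t))$ to reduce to the joint \chf\ of the subordinators at the complex arguments $z_j=u_j\mu_j+i\sigma_j^2u_j^2/2$, then substitute $G_j=I_j+\alpha_j H_j$ and $H_2=aH_1+Z_a$ and factor by independence. Your side remark is also well taken: the derivation genuinely yields $\phi_{Z_a(t)}(\alpha_2 z_2)$, so the argument $u_2\mu_2+i\sigma_2^2u_2^2/2$ printed in \eqref{eqn:generalchf} is missing the factor $\alpha_2$ (it is correct only when $\alpha_2=1$), given that $\phi_{Z_a(t)}$ in \eqref{eqn:chfNigSem} is the \chf\ of the \arem\ of $H_1(t)$ itself rather than of $\alpha_2 H_2(t)$.
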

\begin{proof}
The proof follows the scheme we used to prove the Proposition~3.5 of \citet{gardini2020}. 
 \par $I_{j}\left(t\right)$ and $H_{1}\left(t\right)$ are IG processes and hence their \chf's at time $t$ can be computed starting from the \chf\ expression of an IG \rv, which is reported in Appendix \ref{sec:Appendix}, whereas $Z_{a}\left(t\right)$ is the $a$-reminder of $H_{1}\left(t\right)$ and then its \chf\ can be easily computed relying on Equation \eqref{eqn:chfsdlaws}. The obtained \chf's of $H_{j}\left(t\right)$, $I_{j}\left(t\right)$ and $Z_{a}\left(t\right)$ are those of Equations \eqref{eqn:chfNigSem}.

Let be $\phi_{\boldsymbol{Y}\left(t\right)}\left(\boldsymbol{u}\right)\coloneqq \mathbb{E}\left[e^{i u_{1} Y_{1}\left(t\right)+ i u_{2} Y_{2}\left(t\right) }\right]$ the \chf\ of the process $\boldsymbol{Y}\left(t\right)$ defined in \eqref{eqn:subordinatedprocess}: conditioning on $G_{1}\left(t\right)$ and $G_{2}\left(t\right)$ and recalling that $W_{1}\left(t\right)$ and $W_{2}\left(t\right)$ are independent \BM's we get:
\begin{equation*}
    \phi_{\boldsymbol{Y}\left(t\right)}\left(\boldsymbol{u}\right) = \mathbb{E}\left[e^{i\left(u_{1}\mu_{1} + i \frac{\sigma_{1}^{2} u_{1}^{2}}{2}\right) G_{1}\left(t\right) } e^{i\left(u_{2}\mu_{2} + i \frac{\sigma_{2}^{2} u_{2}^{2}}{2}\right) G_{2}\left(t\right) }\right]
\end{equation*}
Substitute in the previous equation the expression of $G_{j}\left(t\right)$, given by \eqref{eqn:SSubordinators},  for $j=1,2$: by the property of the expected value for the product of independent \rv's, since $I_{j}\left(t\right),H_{1}\left(t\right)$ and $Z_{a}\left(t\right)$ are mutually independent processes, we finally get the result of the Equation \eqref{eqn:generalchf}.
\end{proof}

\subsection{Semeraro-Luciano's \sd-NIG model}
\label{sec:LSSD}
In this subsection we extend the model of \citet{SL2010} and we build bivariate NIG processes with stochastic delays relying on the \sd\ subordinators $\left(H_{1}\left(t\right),H_{2}\left(t\right)\right)$ defined in \eqref{eqn:subordinatorsSD}. Unlike the previous model, standard correlated \BM's, $W_{j}^{\rho}\left(t\right)$, are considered in order to obtain higher correlations in log-returns.

Let $I_{j}\left(t\right), \; j=1,2$, be subordinators and let $H_{1}\left(t\right)$ and $H_{2}\left(t\right)$ be two \sd\ subordinators independent of  $I_{j}\left(t\right)$. We define:

\begin{equation}
\boldsymbol{Y}^{\rho}\left(t\right) =
				\left(
				\begin{array}{ll}
				\mu_{1} I_{1}\left(t\right) + \sigma_{1}W_{1}\left(I_{1}\left(t\right)\right) +\alpha_{1}\mu_{1}H_{1}\left(t\right) + \sqrt{\alpha_{1}}\sigma_{1}W_{1}^{\rho}\left(H_{1}\left(t\right)\right) \\
\mu_{2} I_{2}\left(t\right) + \sigma_{2}W_{2}\left(I_{2}\left(t\right)\right) +\alpha_{2}\mu_{2}H_{2}\left(t\right) + \sqrt{\alpha_{2}}\sigma_{2}\left(W_{2}^{\rho}\left(aH_{1}\left(t\right)\right) + \tilde{W}\left(Z_{a}\left(t\right)\right)\right)
				
 				\end{array}\right)
				\label{eqn:LSgeneralization}
\end{equation}
where $W_{1}\left(t\right)$ and $W_{2}\left(t\right)$ are standard independent \BM's,
$\mathbb{E}\left[dW_{1}^{\rho}\left(t\right)dW_{2}^{\rho}\left(t\right)\right] = \rho dt$
and $\tilde{W}\left(t\right)$ is another standard \BM\ independent of $\boldsymbol{W}\left(t\right) = \left(W_{1}\left(t\right),W_{2}\left(t\right)\right)$ and $\boldsymbol{W}^{\rho}\left(t\right) = \left(W_{1}^{\rho}\left(t\right),W_{2}^{\rho}\left(t\right)\right)$.\\

\par A bivariate version of NIG process with \sd-subordinators can be easily obtained letting $H_{j}\left(t\right)$ and $I_{j}\left(t\right)$ for $j=1,2$ be distributed as in the previous section.
Moreover, the expression of the \chf\ of the process $\boldsymbol{Y}^{\rho}\left(t\right)$ at time $t$ is  given by the following proposition.
\begin{prop}
\label{prop:jointChfSemeraroLuciano}
The joint \chf\ $\phi_{\boldsymbol{Y}^{\rho}\left(t\right)}\left(\boldsymbol{u}\right)$ of the process $\boldsymbol{Y}^{\rho}\left(t\right) = \left(Y_{1}^{\rho}\left(t\right),Y_{2}^{\rho}\left(t\right)\right)$ at time $t$ defined in \eqref{eqn:LSgeneralization} is given by:

\begin{equation}
\begin{split}
\phi_{\boldsymbol{Y}\left(t\right)^{\rho}}\left(\boldsymbol{u}\right) = & \phi_{I_{1}\left(t\right)}\left(u_{1}\mu_{1} + \frac{i}{2}\sigma_{1}^{2}u_{1}^{2}\right) \phi_{I_{2}\left(t\right)}\left(u_{2}\mu_{2} + \frac{i}{2}\sigma_{2}^{2}u_{2}^{2}\right)\\
 & \phi_{H_{1}\left(t\right)}\left(\frac{i}{2}u_{1}^{2}\alpha_{1}\sigma_{1}^{2}\left(1-a\right) +  \boldsymbol{u}^{T}\boldsymbol{\mu} +\frac{i}{2}\boldsymbol{u}^{T}a\Sigma \boldsymbol{u}\right) \phi_{Z_{a}\left(t\right)}\left(u_{2}\mu_{2}\alpha_{2} + \frac{i}{2}u_{2}^{2}\alpha_{2}\sigma_{2}^{2}\right)
\end{split}
\nonumber
\end{equation}
where $\boldsymbol{\mu} = \left[\alpha_{1}\mu_{1},a\alpha_{2}\mu_{2}\right]$ and

\[\Sigma =
\begin{bmatrix}
	\alpha_{1}\sigma_{1}^{2} & \sqrt{\alpha_{1}\alpha_{2}}\sigma_{1}\sigma_{2}\rho \\
	\sqrt{\alpha_{1}\alpha_{2}}\sigma_{1}\sigma_{2}\rho & \alpha_{2}\sigma_{2}^{2}
\end{bmatrix} 
\]
where $\phi_{H_{1}\left(t\right)},\phi_{H_{2}\left(t\right)}$ and $\phi_{Z_{a}\left(t\right)}$ were defined in Proposition \ref{cor:2DNIGchfSemeraro}.
\end{prop}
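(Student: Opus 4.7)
The plan is to mimic the proof of Proposition \ref{cor:2DNIGchfSemeraro}: condition on the subordinators, exploit that what remains is a bivariate Gaussian, and then factor the outer expectation using independence. First I would substitute the definition of the self-decomposable subordinator $H_2(t) = a H_1(t) + Z_a(t)$ into the second component of \eqref{eqn:LSgeneralization}, so that $\boldsymbol{Y}^\rho(t)$ is expressed only in terms of the independent ingredients $I_1(t), I_2(t), H_1(t), Z_a(t)$ and of the Brownian motions $W_1,W_2,W_1^\rho,W_2^\rho,\tilde W$.

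Next I would condition on the $\sigma$-algebra $\mathcal F = \sigma\bigl(I_1(t), I_2(t), H_1(t), Z_a(t)\bigr)$. Given $\mathcal F$, the pair $(Y_1^\rho(t), Y_2^\rho(t))$ is bivariate Gaussian, and its conditional mean is the sum of the drift terms $\mu_1 I_1(t) + \alpha_1\mu_1 H_1(t)$ in the first component and $\mu_2 I_2(t) + \alpha_2\mu_2 a H_1(t) + \alpha_2\mu_2 Z_a(t)$ in the second. For the conditional covariance, independence of $W_1,W_2,\tilde W$ from $(W_1^\rho,W_2^\rho)$ reduces the task to computing the covariance of the correlated pair. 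The key identity is that for two standard Brownian motions with instantaneous correlation $\rho$, one has $\mathrm{Cov}\bigl(W_1^\rho(s_1),W_2^\rho(s_2)\bigr)=\rho\,\min(s_1,s_2)$; applied to $s_1=H_1(t)$ and $s_2=aH_1(t)$ with $a\in(0,1)$, this yields $\rho a H_1(t)$, giving the conditional covariance $\sqrt{\alpha_1\alpha_2}\sigma_1\sigma_2\rho\,a\,H_1(t)$.

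Plugging this into the Gaussian characteristic function $\exp\bigl(i\boldsymbol u^T\boldsymbol m-\tfrac12\boldsymbol u^T\Sigma_{\mathrm{cond}}\boldsymbol u\bigr)$, I would collect terms according to whether they multiply $I_1(t)$, $I_2(t)$, $H_1(t)$, or $Z_a(t)$. The $I_j(t)$ and $Z_a(t)$ coefficients are immediate and reproduce the arguments listed in the statement. The delicate step is to rewrite the coefficient of $H_1(t)$, namely
\[
i(u_1\alpha_1\mu_1 + a u_2\alpha_2\mu_2) - \tfrac12\bigl(u_1^2\alpha_1\sigma_1^2 + a u_2^2\alpha_2\sigma_2^2 + 2a u_1 u_2\sqrt{\alpha_1\alpha_2}\sigma_1\sigma_2\rho\bigr),
\]
in the form $i\cdot z$ with $z = \tfrac{i}{2}u_1^2\alpha_1\sigma_1^2(1-a) + \boldsymbol u^T\boldsymbol\mu + \tfrac{i}{2}\boldsymbol u^T a\Sigma\boldsymbol u$. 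This follows by splitting $u_1^2\alpha_1\sigma_1^2 = u_1^2\alpha_1\sigma_1^2(1-a) + a u_1^2\alpha_1\sigma_1^2$, the first piece accounting for the correction term and the second merging with the remaining quadratic form into $\boldsymbol u^T a\Sigma\boldsymbol u$.

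Finally, taking unconditional expectations and invoking the mutual independence of $I_1(t), I_2(t), H_1(t), Z_a(t)$, the joint expectation factorises into the product of the four characteristic functions displayed in the statement, with arguments as identified in the previous step. The characteristic functions of $I_j(t)$, $H_j(t)$ are those of Proposition \ref{cor:2DNIGchfSemeraro}, while that of the $a$-remainder $Z_a(t)$ is obtained from \eqref{eqn:chfsdlaws}. The main obstacle is bookkeeping: handling the correlated-Brownian cross covariance at the two different times $H_1(t)$ and $aH_1(t)$, and then rearranging the quadratic form in $\boldsymbol u$ so that the $(1-a)$ correction is displayed as a standalone term; everything else is an essentially routine conditioning argument.
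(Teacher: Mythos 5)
Your proposal is correct and follows the same route as the paper's (very terse) proof: condition on the subordinators, apply the bivariate Gaussian characteristic function $\exp\bigl(i\boldsymbol u^T\boldsymbol m-\tfrac12\boldsymbol u^T\Sigma_{\mathrm{cond}}\boldsymbol u\bigr)$, and factor the outer expectation over the mutually independent processes $I_1,I_2,H_1,Z_a$. Your computation of the cross term via $\mathrm{Cov}\bigl(W_1^\rho(H_1(t)),W_2^\rho(aH_1(t))\bigr)=\rho\,a\,H_1(t)$ and the split $u_1^2\alpha_1\sigma_1^2=u_1^2\alpha_1\sigma_1^2(1-a)+a\,u_1^2\alpha_1\sigma_1^2$ correctly reproduce the argument of $\phi_{H_1(t)}$, which is exactly the bookkeeping the paper leaves implicit.
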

\begin{proof}
The proof retraces the idea we used in the proof of Proposition \ref{cor:2DNIGchfSemeraro}, recalling, in addition, that the \chf\ $\varphi\left(\boldsymbol{t}\right)$ of a multivariated normal \rv\ with mean vector $\boldsymbol{\mu}$ and covariance matrix $\boldsymbol{\Sigma}$ is given by:
\begin{equation*}
    \varphi\left(\boldsymbol{t}\right) = \exp\left(i\boldsymbol{\mu}^{T}\boldsymbol{t} - \frac{1}{2}\boldsymbol{t}^{T}\boldsymbol{\Sigma}\boldsymbol{t}\right).
\end{equation*}

\end{proof}

It is easy to show, by direct computation or by using the \chf\ of Proposition \ref{prop:jointChfSemeraroLuciano}, that the linear correlation coefficient at time $t$ is given by:

\begin{equation}
	\rho_{\boldsymbol{Y}^{\rho}\left(t\right)} = \frac{a\left(\mu_{1}\mu_{2}\alpha_{1}\alpha_{2} A +\rho A \sigma_{1}\sigma_{2}\sqrt{\alpha_{1}\alpha_{2}}\right)}{\sqrt{\sigma_{1}^{2} + \mu_{1}^{2}\alpha_{1}}\sqrt{\sigma_{2}^{2} + \mu_{2}^{2}\alpha_{2}}}
	\label{eqn:linCorrSL}
\end{equation}

\noindent Once again, $a$ can be seen as the parameter that activates stochastic delay.

\subsection{Ballotta-Bonfiglioli's \sd-NIG model}
\label{sec:BBSD}
The construction of bivariate \Levy\ processes proposed by \citet{BB2013} is slightly different from those of \citet{Semeraro2008}  and \citet{SL2010} because the dependence is not introduced at the level of the subordinators but rather directly on the subordinated processes. Nevertheless, we can also extend this approach to include stochastic delay. \\
\par The construction of the a bivariated process with stochastic delay proceeds as follow. Let $H_{1}\left(t\right)$ and $H_{2}\left(t\right)$ be \sd\ subordinators of \eqref{eqn:subordinatorsSD}: define subordinated \BM's $R_{j}\left(t\right)$, for $j=1,2$, with drift $\beta_{R_{j}}\in \mathbb{R}$ and diffusion $\gamma_{R_{j}}\in \mathbb{R}^{+}$, as:
\begin{align}
R_{1}\left(t\right) & = \beta_{R_{1}}H_{1}\left(t\right) + \gamma_{R_{1}}W\left(H_{1}\left(t\right)\right) \nonumber \\
R_{2}\left(t\right) & = \beta_{R_{2}}H_{2}\left(t\right) + \gamma_{R_{2}}\left(W\left(aH_{1}\left(t\right)\right) + \tilde{W}\left(Z_{a}\left(t\right)\right)\right) \label{eqn:Rprocess}
\end{align}
where $W\left(t\right)$ and $\tilde{W}\left(t\right)$ are standard independent \BM's.
Let subordinated \BM's $X_{j}\left(t\right)$,  with drift $\beta_{j} \in \mathbb{R}$ and diffusion $\gamma_{j}\in \mathbb{R}^{+}$, be given by:
\begin{equation*}
X_{j}\left(t\right) = \beta_{j}G_{j}\left(t\right) + \gamma_{j}W_{j}\left(G_{j}\left(t\right)\right)
\end{equation*}
where $W_{j}\left(t\right)$ are standard independent \BM's whereas $G_{j}\left(t\right)$ are arbitrary subordinators with variance at time $t=1$ given by $\nu_{j} \in \mathbb{R}^{+}$.
\par Finally, combining previous processes, we can define the process $\boldsymbol{Y}\left(t\right)$ as follow:

\begin{equation}
\boldsymbol{Y}\left(t\right) = \left(Y_{1}\left(t\right),Y_{2}\left(t\right)\right) = \left(X_{1}\left(t\right) + a_{1}R_{1}\left(t\right), X_{2}\left(t\right) + a_{2} R_{2}\left(t\right)\right)
\label{eqn:extendedModelBB}
\end{equation}
where $a_{j} \in \mathbb{R}$.\\
\par As detailed in \citet{BB2013} and \citet{gardini2020}, for any chosen distribution for the margin process $Y_{j}\left(t\right)$, for example a NIG distribution, it is possible to impose convolution conditions on processes $X_{j}\left(t\right)$ and $R_{j}\left(t\right)$ so that their linear combination has the same given distribution of $Y_{j}\left(t\right)$. The following proposition shows how to build a bivariate NIG process with stochastic delays and gives the closed form expression for its \chf.  

\begin{prop}
Consider an IG subordinator $H_{1}\left(t\right) \sim IG_{T}\left(t,\frac{t^{2}}{\nu_{R}}\right)$,   $H_{2}\left(t\right)$ defined in Equation \eqref{eqn:subordinatorsSD} and 
 $R_{j}\left(t\right)$ given by \eqref{eqn:Rprocess}. Let then $X_{j}\left(t\right)$ be a subordinated \BM\ via an IG process $G_{j}\left(t\right) \sim IG_{T}\left(t,\frac{t^{2}}{\nu_{j}}\right)$, for $j=1,2$. 
 \\ Then the components $Y_{j}\left(t\right)$ in \eqref{eqn:extendedModelBB} are distributed according to a $NIG$ 
law and the joint \chf\ is
\begin{equation}
\begin{split}
\phi_{\boldsymbol{Y}\left(t\right)}\left(u_{1},u_{2}\right) =  \phi\left(\beta_{1}u_{1} + \frac{i}{2}u_{1}^{2}\gamma_{1}^{2}; t,\frac{t^{2}}{\nu_{1}}\right)
 \phi\left(\beta_{2}u_{2} + \frac{i}{2}u_{2}^{2}\gamma_{2}^{2}; t,\frac{t^2}{\nu_{2}}\right)
 \xi\left(\boldsymbol{a}\circ\boldsymbol{u}\right) 
\label{eqn:BBExtended}
\end{split}
\end{equation}
where $\phi\left(u;\mu,\lambda\right)$  is the \chf\ of a $IG_{T}\left(\mu,\lambda\right)$ distributed \rv, $\boldsymbol{a} = \left(a_{1},a_{2}\right)$, $\boldsymbol{u} = \left(u_{1},u_{2}\right)$ and $\circ$ is the Hadamard product. Finally $\xi\left(\boldsymbol{u}\right)$ is given by:

\begin{equation}
\begin{split}
\xi\left(\boldsymbol{w}\right) =& \phi\left(
w_{1}\beta_{R_{1}} + w_{2}\beta_{R_{2}}a + \frac{i}{2}\left(w_{1}^{2}\gamma_{R_{1}}^{2} +2w_{1}w_{2}\gamma_{R_{1}}\gamma_{R_{2}}a +  w_{2}^{2}a\gamma_{R_{2}}^{2}\right); t,\frac{t^2}{\nu_{R}}\right) \\
&  \frac{\phi\left(w_{2}\beta_{R_{2}} +\frac{i}{2}w_{2}^{2}\gamma_{R_{2}}^{2};t,\frac{t^2}{\nu_{R}}\right)}{\phi\left(w_{2}a\beta_{R_{2}} +\frac{i}{2}a^{2}w_{2}^{2}\gamma_{R_{2}}^{2};t,\frac{t^2}{\nu_{R}}\right)}
\end{split}
\label{eqn:Rprocesschf}
\end{equation}
\end{prop}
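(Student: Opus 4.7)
The plan is to follow exactly the pattern of Proposition~\ref{cor:2DNIGchfSemeraro}: first condition on all the subordinators to reduce the Brownian part to Gaussian characteristic functions, then uncondition using the analytic extension of the IG \chf\ via the identity $\phi_H\bigl(u\mu + \tfrac{i}{2}\sigma^2 u^2\bigr) = \mathbb{E}\bigl[e^{(iu\mu - \frac{1}{2}\sigma^2 u^2)H}\bigr]$. The starting observation is factorization: the triples $(G_1,G_2)$ and $(H_1,Z_a)$ are independent, and $(W_1,W_2)$ are independent of $(W,\tilde W)$, so $(X_1(t),X_2(t))$ and $(R_1(t),R_2(t))$ are independent, which splits $\phi_{\boldsymbol Y(t)}(u_1,u_2)$ into an $X$-factor and an $R$-factor. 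For the $X$-factor, $X_1$ and $X_2$ are independent NIG processes, so conditioning each on its own $G_j(t)$ and applying the Gaussian \chf\ reproduces the first two $\phi(\cdot;t,t^2/\nu_j)$ factors in \eqref{eqn:BBExtended}; this is the routine part.

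The non-trivial part is the $R$-factor, with $w_j := a_j u_j$. I would substitute $H_2(t) = aH_1(t)+Z_a(t)$ into $R_2$ and then decompose $W(H_1(t)) = W(aH_1(t)) + \Delta W$ with $\Delta W := W(H_1(t)) - W(aH_1(t))$; conditionally on $H_1(t)$, $W(aH_1(t))$ and $\Delta W$ are independent centred Gaussians of variances $aH_1(t)$ and $(1-a)H_1(t)$, while $\tilde W(Z_a(t))$ is, conditionally on $Z_a(t)$, an independent Gaussian of variance $Z_a(t)$. Regrouping the coefficients that multiply the three independent Gaussian pieces and applying the Gaussian \chf\ produces an exponent linear in $H_1(t)$ and in $Z_a(t)$ only. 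A short bookkeeping shows that the coefficient multiplying $H_1(t)$ is exactly the first argument of $\phi$ in \eqref{eqn:Rprocesschf}—the factor $a$ multiplying $w_2^2\gamma_{R_2}^2$ and the mixed term $2w_1 w_2 \gamma_{R_1}\gamma_{R_2}a$ emerge because only the common clock $aH_1(t)$ creates correlation between $W(H_1(t))$ and $W(aH_1(t))$—while the coefficient multiplying $Z_a(t)$ is $w_2\beta_{R_2}+\tfrac{i}{2}w_2^2\gamma_{R_2}^2$. Unconditioning gives $\phi_{H_1(t)}$ times $\phi_{Z_a(t)}$, and the identity $\phi_{Z_a}(v)=\phi_{H_1}(v)/\phi_{H_1}(av)$ from \eqref{eqn:chfsdlaws} rewrites the latter as the quotient appearing in \eqref{eqn:Rprocesschf}.

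For the marginal NIG claim, observe that conditional on $H_1$ and $Z_a$ the sum $W(aH_1(t))+\tilde W(Z_a(t))$ is Gaussian of variance $aH_1(t)+Z_a(t)=H_2(t)$, so in law $R_2(t)$ is a Brownian motion time-changed by the IG subordinator $H_2(t)$ and is therefore NIG; $R_1$ is NIG by construction, and so is each $X_j$. That $Y_j = X_j+a_j R_j$ is NIG then reduces to the convolution-stability parameter-matching condition of \citet{BB2013} and \citet{gardini2020}, which is the standard IG convolution identity. The only step that needs real care is the conditional Gaussian computation in the $R$-block: because $W(H_1)$ and $W(aH_1)$ share the same driving Brownian motion, the factor $a$ in front of the diffusion cross-terms in \eqref{eqn:Rprocesschf} is the quantitative signature of that shared clock, and any sloppiness in decomposing $W(H_1) = W(aH_1)+\Delta W$ will spoil the coefficients; everything else is unconditioning.
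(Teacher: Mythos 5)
Your proposal is correct and follows essentially the same route as the paper: factor the \chf\ into the $X$-block and the $R$-block by independence, condition the $R$-block on $H_1(t)$ and $Z_a(t)$, apply the Gaussian \chf, and then uncondition using the IG \chf\ and the \arem\ identity $\phi_{Z_a}(v)=\phi_{H_1}(v)/\phi_{H_1}(av)$. The only difference is that you spell out the decomposition $W(H_1(t))=W(aH_1(t))+\Delta W$ and the resulting coefficient bookkeeping, which the paper compresses into ``by direct computation''; your version correctly reproduces the cross-term $2w_1w_2\gamma_{R_1}\gamma_{R_2}a$ and the factor $a$ on $w_2^2\gamma_{R_2}^2$.
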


\begin{proof}
Relying on properties of the IG distribution in Appendix \ref{sec:Appendix}, it is easy to check that marginal distributions of $\boldsymbol{Y}\left(t\right)$ process have a NIG law.\\
\par Since $X_{1}\left(t\right)$, $X_{2}\left(t\right)$ and $\boldsymbol{R}\left(t\right)$ are mutually independent we have that
\begin{equation}
    \phi_{\boldsymbol{Y}\left(t\right)}\left(u_{1},u_{2}\right) = \mathbb{E}\left[e^{iu_{1}X_{1}\left(t\right)}\right] \mathbb{E}\left[e^{iu_{2}X_{2}\left(t\right)}\right]  \mathbb{E}\left[e^{iu_{1}R_{1}\left(t\right) + iu_{2}R_{2}\left(t\right)}\right]
    \label{eqn:BBproof1}
\end{equation}
\par The computation consists is two steps: firstly we compute the \chf\ $\mathbb{E}\left[e^{iu_{1}R_{1}\left(t\right) + iu_{2}R_{2}\left(t\right)}\right]$  
 of the joint process $\boldsymbol{R}\left(t\right)$ at time $t$ defined in \eqref{eqn:Rprocess}. This can be done by conditioning with respect $H_{1}\left(t\right)$ and $Z_{a}\left(t\right)$, relying upon the independence of $W\left(t\right)$ and $\tilde{W}\left(t\right)$ and recalling the expression of the \chf\ of a $IG_{T}\left(t,\frac{t^{2}}{\nu_{R}}\right)$ \rv, which is given in Appendix \ref{sec:Appendix}, and that of its $a$-reminder, obtained by applying the Equation \eqref{eqn:chfsdlaws}. By direct computation we obtain that the \chf\ of $\boldsymbol{R}\left(t\right)$ has the form shown in Equation \eqref{eqn:Rprocesschf} valuated at $\boldsymbol{w}=\boldsymbol{a}\circ \boldsymbol{u}$.
\par Secondly, we observe that first two terms of the right hand side of the Equation \eqref{eqn:BBproof1} are the \chf's of subordinated \BM's where subordinators are IG processes and hence their expressions are given by:
\begin{equation}
    \mathbb{E}\left[e^{iu_{j}X_{j}\left(t\right)}\right] = \phi\left(\beta_{j}u_{j} + \frac{i}{2}u_{j}^{2}\gamma_{j}^{2}; t,\frac{t^{2}}{\nu_{j}}\right)
        \label{eqn:BBproof2}
\end{equation}
where $\phi\left(u;\mu,\lambda\right)$ denotes the \chf\ of a \rv\ with $IG_{T}\left(\mu,\lambda\right)$ law.
Combining Equations \eqref{eqn:Rprocesschf}, \eqref{eqn:BBproof1} and     \eqref{eqn:BBproof2} we finally obtain \eqref{eqn:BBExtended}.
\end{proof}

\vspace{0.1cm}

\par The linear correlation coefficient of a bivariate \sd-NIG process at time $t$ can be directly computed and it is given by:

\begin{equation}
\rho_{\boldsymbol{Y}\left(t\right)} = \frac{a_{1}a_{2}a\left(\beta_{R_{1}}\beta_{R_{2}}\nu_{R}+ \gamma_{R_{1}}\gamma_{R_{2}}\right) }{\sqrt{\sigma_{1}^{2} + \mu_{1}^{2}\alpha_{1}}\sqrt{\sigma_{2}^{2} + \mu_{2}^{2}\alpha_{2}}}
\label{eqn:linCorrBB}
\end{equation}

As expected, if $a=1$ we retrieve the original expression of correlation coefficient obtained by \citet{BB2013}.





\section{Simulation Algorithm}
\label{sec:SimAlg}
Simulating the paths of the model dynamics defined in Section \ref{sec:SDMODELS} can be accomplished by simulating BM's on a stochastic time grid generated by the relative IG \sd\ subordinators. These subordinators are only marginally IG, in order to get the joint trajectories one has to simulate the skeleton of $Z_a(t)$ in \eqref{eqn:subordinatorsSD} and therefore must have a way to draw from the law of the \arem\  $Z_{a}$ of an IG distribution.      

The methodology that we propose in this section is based on the close relation between \sd\ laws and \Levy-driven OU processes.
Following the naming convention in \citet{BNSh01} we say that a \Levy-driven OU process $X\left(t\right)$ is a IG-OU process if its stationary law is an $IG_{B}$ distribution with scale parameter $\delta$ and shape parameter $\gamma$. Now
a well known result (see for instance \citet{CT2003} or \citet{Sato}) is that, a
given one-dimensional distribution $D$ always is the stationary law of a suitable \Levy-driven OU process if and only if $D$ is \sd. As shown by \citet{Halgreen1979} the IG law  is \sd\ and can be taken as the stationary distribution of a fully-fledged OU process.

We recall that a \Levy-driven OU process is defined as,
\begin{equation}\label{eq:levy:ou}
X\left(t\right) = X\left(0\right)e^{-\lambda t} + \int_{0}^{t}e^{-\lambda \left(t-u\right)}dL\left(u\right)
\end{equation}
where $L\left(t\right)$ is a \Levy\ process and $\lambda > 0$. In addition, as observed in \citet{BNSh01}, $X(t)$ is stationary if and only if the \chf\ $\phi_X\left(u\right)$ of its marginal distribution is of the form
\begin{equation*}
    \phi_X(u) = \phi_X(ue^{-\lambda t})\chi_a(u, t)
\end{equation*}
where $\chi_a(u, t)$ is the \chf\ of the second term of \eqref{eq:levy:ou}. On the other hand, due to the definition of \sd, the last equation means that $\chi_a(u, t)$ is the \chf\ of the \arem\ of the stationary law if one sets $a=e^{-\lambda t}$. We can then write
\begin{equation}
X(t) = X\left(0\right)e^{-\lambda t} + Z_{e^{-\lambda t}}(t).
\label{eqn:OUsol1}
\end{equation}
Note that the parameter $e^{-\lambda t}$ is now time-dependent and the law of $Z_{e^{-\lambda t}}(t)$ coincides with that of $Z_a(t)$ with $a=e^{-\lambda t}$ only at a given time $t$, indeed $Z_{e^{-\lambda t}}(t)$ is not \Levy\ but rather an additive process. Nevertheless, in practice the simulation of the skeleton of a IG-OU process relies on the generation of a \rv\ that is distributed according to the law of the \arem\ of the stationary distribution setting $a=e^{-\lambda t}$. 

Starting from the results of \citet{Zhang2008} relative to IG-OU processes, we derive an efficient algorithm to simulate the \arem\ of the $IG_{B}(\delta, \gamma)$, that is the building block for the generation of the trajectory of the process $Z_a(t)$.
\begin{thm}[\citet{Zhang2008}]
\label{thm:Zhang}
The \rv\  
\begin{equation*}
Z_{a}^{\Delta} = \int_{0}^{\Delta}e^{-\lambda \left(\Delta -u\right)}dL\left(u\right), \quad a=e^{-\lambda \Delta}, \quad \Delta > 0  
\end{equation*}
  can be represented as  
\begin{equation*}
    Z_{a}^{\Delta} \eqd W_{0}^{\Delta} + \sum_{i=1}^{\tilde{N}^{\Delta}}W_{i}^{\Delta}
\end{equation*}
where $W_{0}^{\Delta} \sim IG_{B}\left(\delta\left(1-e^{-\frac{1}{2}\lambda \Delta}\right),\gamma\right)$,   $\tilde{N}^{\Delta}$ is a Poisson-distributed \rv\ with parameter $\delta\left(1-e^{-\frac{1}{2}\lambda \Delta}\right)\gamma$ and $W_{i}^{\Delta}$ are independent \rv's with \pdf:

\begin{equation}
    f_{W^{\Delta}}\left(w\right) = \frac{\gamma^{-1}}{\sqrt{2\pi}}w^{-\frac{3}{2}}\left(e^{\frac{1}{2}\lambda \Delta}-1\right)^{-1} \left(e^{-\frac{1}{2}\gamma^{2}w}-e^{-\frac{1}{2}\gamma^{2}we^{\lambda \Delta}}\right) \mathbbm{1}_{\left\{w>0\right\}}\left(w\right)
    \label{eqn_widensity}
\end{equation}
\end{thm}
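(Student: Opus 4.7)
The plan is to verify the distributional identity by comparing characteristic functions. The cornerstone is the \Levy--Khintchine representation of the $IG_{B}(\delta,\gamma)$ law, whose \Levy\ density is $\frac{\delta}{\sqrt{2\pi}}w^{-3/2}e^{-\gamma^{2}w/2}\mathbbm{1}_{\{w>0\}}$, which yields the elementary identity
\begin{equation*}
\int_{0}^{\infty}(e^{iuw}-1)\,w^{-3/2}e^{-c^{2}w/2}\,dw \;=\; \sqrt{2\pi}\,\bigl(c-\sqrt{c^{2}-2iu}\bigr) \qquad (c>0).
\end{equation*}
This will be applied twice: once with $c=\gamma$ and once with $c=\gamma e^{\lambda\Delta/2}=\gamma/\sqrt{a}$, matching exactly the two exponentials appearing in $f_{W^{\Delta}}$ of \eqref{eqn_widensity}.

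First I would compute the cumulant function of $Z_{a}^{\Delta}$. Since $L$ is the background driving \Levy\ process of an $IG_{B}(\delta,\gamma)$-stationary OU, the stationarity identity behind \eqref{eqn:OUsol1} gives
\begin{equation*}
\log\phi_{Z_{a}^{\Delta}}(u) \;=\; \log\phi_{X}(u)-\log\phi_{X}(ue^{-\lambda\Delta}) \;=\; \delta\bigl[\sqrt{\gamma^{2}-2iua}-\sqrt{\gamma^{2}-2iu}\bigr]
\end{equation*}
with $a=e^{-\lambda\Delta}$, once we plug in the explicit cumulant $\delta(\gamma-\sqrt{\gamma^{2}-2iu})$ of the $IG_{B}(\delta,\gamma)$ marginal.

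Next I would decompose this cumulant by adding and subtracting $\delta(1-\sqrt{a})\gamma$:
\begin{equation*}
\log\phi_{Z_{a}^{\Delta}}(u) \;=\; \underbrace{\delta(1-\sqrt{a})\bigl(\gamma-\sqrt{\gamma^{2}-2iu}\bigr)}_{(\mathrm{A})} \;+\; \underbrace{\delta\bigl[\sqrt{\gamma^{2}-2iua}-\sqrt{a}\sqrt{\gamma^{2}-2iu}-(1-\sqrt{a})\gamma\bigr]}_{(\mathrm{B})}.
\end{equation*}
Term $(\mathrm{A})$ is by inspection the cumulant of an $IG_{B}\bigl(\delta(1-e^{-\lambda\Delta/2}),\gamma\bigr)$ law, which is precisely the distribution asserted for $W_{0}^{\Delta}$.

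Finally I would verify that $(\mathrm{B})=\delta(1-\sqrt{a})\gamma\bigl(\phi_{W^{\Delta}}(u)-1\bigr)$, i.e.\ the cumulant of a compound Poisson sum with rate $\delta(1-\sqrt{a})\gamma$ and jump law $f_{W^{\Delta}}$. Substituting $f_{W^{\Delta}}$ into $\phi_{W^{\Delta}}(u)-1=\int_{0}^{\infty}(e^{iuw}-1)f_{W^{\Delta}}(w)\,dw$, splitting the difference of the two exponentials in $f_{W^{\Delta}}$, and invoking the basic integral identity above separately at $c=\gamma$ and $c=\gamma/\sqrt{a}$ produces, after collecting the normalising constant $\gamma^{-1}/[\sqrt{2\pi}(e^{\lambda\Delta/2}-1)]$, exactly the combination appearing in $(\mathrm{B})$. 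Independence of $W_{0}^{\Delta}$ from the compound Poisson sum is automatic from the additive splitting of the log-characteristic function, so exponentiating yields the claimed sum representation. The main obstacle is purely algebraic bookkeeping: tracking the rescaling $\gamma\mapsto\gamma/\sqrt{a}$ inside the second square root and using the identity $\sqrt{a}\sqrt{\gamma^{2}-2iu}=\sqrt{\gamma^{2}a-2iua}$ so that the two applications of the \Levy--Khintchine integral combine into precisely $\sqrt{\gamma^{2}-2iua}-\sqrt{a}\sqrt{\gamma^{2}-2iu}-(1-\sqrt{a})\gamma$.
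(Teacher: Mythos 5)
Your proof is correct. Note that the paper itself offers no proof of this statement: Theorem~\ref{thm:Zhang} is imported verbatim from \citet{Zhang2008}, so there is no in-paper argument to compare against. Your route --- writing the cumulant of the \arem\ as $\log\phi_X(u)-\log\phi_X(au)=\delta\bigl(\sqrt{\gamma^2-2iua}-\sqrt{\gamma^2-2iu}\bigr)$, splitting off the $IG_B\bigl(\delta(1-\sqrt{a}),\gamma\bigr)$ piece, and recognising the residual term $(\mathrm{B})$ as the compound-Poisson cumulant via the integral identity $\int_0^\infty(e^{iuw}-1)w^{-3/2}e^{-c^2w/2}\,dw=\sqrt{2\pi}\bigl(c-\sqrt{c^2-2iu}\bigr)$ applied at $c=\gamma$ and $c=\gamma/\sqrt{a}$ --- is the standard derivation and all the algebra checks out: in particular the ratio $(1-\sqrt{a})/(a^{-1/2}-1)=\sqrt{a}$ and the rescaling $\sqrt{a}\sqrt{\gamma^2/a-2iu}=\sqrt{\gamma^2-2iua}$ make term $(\mathrm{B})$ come out exactly as required, and the same identity with $u=0$ confirms that $f_{W^{\Delta}}$ integrates to one. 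The additive splitting of the log-characteristic function into two valid cumulants justifies the claimed independent-sum representation.
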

Assuming for simplicity $\Delta=1$, we can then rely on Theorem \ref{thm:Zhang} to conceive the simulation procedure of two correlated IG \rv's with linear correlation coefficient $a$ and hence of the \sd\ subordinators of \eqref{eqn:subordinatorsSD} simply setting $\lambda = -\log a$. We get:

\begin{equation*}
Z_{a} \eqd W_{0} + \sum_{i=1}^{\tilde{N}}W_{i}
\end{equation*}
where $W_{0}\sim IG\left(\delta\left(1 - a^{\frac{1}{2}}\right),\gamma\right)$ and $\tilde{N} \sim Poisson\left(\delta\left(1 - a^{\frac{1}{2}}\right)\gamma\right)$.

 Drawing from  IG and Poisson laws is relatively easy, whereas the simulation of $W_{i}$ is non-standard and can be generated using the acceptance-rejection algorithm proposed by \citet{Zhang2008} observing that:

\begin{equation*}
    f_{W}\left(w\right) \le c\cdot \Gamma\left(\frac{1}{2},\frac{1}{2}\gamma^{2}\right)
\end{equation*}
where $c = \frac{1}{2}\left(1 + e^{\frac{1}{2}\lambda}\right)$ and $\Gamma(\alpha, \beta)$ denote the law of a gamma \rv\ with shape $\alpha>0$ and rate $\beta>0$.\\

\par Although \citet{Zhang2008} has illustrated a more accurate solution to reduce the expected number of iterations before acceptance $c$, acceptance-rejection algorithms might be slow and then sometimes inadequate for real time applications. This situation is exacerbated if the software implementation relies on interpreted languages like MATLAB, Python or R.  
In the following, we  detail a simple and more efficient way to draw from the \pdf\ $f_{W^{\Delta}}\left(w\right)$ without relying on acceptance-rejection methods. 

\par Assuming once again $\Delta  = 1$ and $\lambda = -\log a$, equation \eqref{eqn_widensity} becomes: 
\begin{equation*}
    f_{W}\left(w\right) = \frac{\gamma^{-1}}{\sqrt{2\pi}}w^{-\frac{3}{2}}\left(a^{-\frac{1}{2}}-1\right)^{-1} \left(e^{-\frac{1}{2}\gamma^{2}w}-e^{-\frac{1}{2}\gamma^{2}\frac{w}{a}}\right) \mathbbm{1}_{\left\{w>0\right\}}\left(w\right).
\end{equation*}
We recall that a \rv\ is distributed according to a Gamma law with shape $\alpha>0$ and rate $\beta>0$ if its \pdf\ is:
\begin{equation*}
    f\left(x\right) = \frac{\beta^{\alpha}}{\Gamma\left(\alpha\right)}x^{\alpha - 1}e^{-\beta x}
\end{equation*}
where $\Gamma\left(z\right) = \int_{0}^{\infty}x^{z-1}e^{-x}dx$ is the Euler Gamma function.
Knowing that $\Gamma\left(\frac{1}{2}\right) = \sqrt{\pi}$ and observing that:
\begin{equation*}
    \int_{1}^{\frac{1}{a}} e^{-\frac{\gamma^{2}}{2}wy} \frac{\gamma^{2}}{2} w dy = e^{-\frac{\gamma^{2}}{2}w} - e^{-\frac{\gamma^{2}}{2}\frac{w}{a}}
\end{equation*}
we can write:

\begin{equation*}
\begin{split}
   f_{W}\left(w\right) & = \int_{1}^{\frac{1}{a}} \frac{y^{-\frac{1}{2}}}{2\left(a^{-\frac{1}{2}}-1\right)} \cdot \frac{\left(\frac{\gamma^{2}}{2} y\right)^{\frac{1}{2}}w^{-\frac{1}{2}}e^{-\frac{\gamma^{2}}{2} y w}}{\Gamma\left(\frac{1}{2}\right)}dy \\ & = \int_{1}^{\frac{1}{a}} f_{Y}\left(y\right) \cdot f_{\Gamma}\left(w\Big|\alpha = \frac{1}{2},\beta = \frac{\gamma^{2}}{2} y\right) dy
\end{split}
\end{equation*}
This means that $f_{W}\left(w\right)$ is a mixture of a Gamma law  $\Gamma\left(\alpha = \frac{1}{2},\beta = \frac{\gamma^{2}}{2} y\right)$  and a law whose \pdf\ and \cdf\ are respectively:

\begin{align*}
f_{Y}\left(y\right) &= \frac{y^{-\frac{1}{2}}}{2\left(a^{-\frac{1}{2}}-1\right)} \mathbbm{1}_{1  \le y \le \frac{1}{a}} \\
F_{Y}\left(y\right) &= \frac{y^{\frac{1}{2}} -1}{a^{-\frac{1}{2}}-1}\mathbbm{1}_{1  \le y \le \frac{1}{a}}
\end{align*}
The simulation of $Z_{a}$ and of the \rv\ $Y$ distributed according to the law with \cdf\ $F_Y(y)$ is  straightforward as is summarized in Algorithms \ref{alg:ZaSimulation} and \ref{alg:WiPiergiacomo}, respectively.

\begin{algorithm}
\caption{Simulation of $Z_{a}$}
\label{alg:ZaSimulation}
\begin{algorithmic}[1]
\State Simulate $W_{0} \sim IG\left(\delta\left(1-\sqrt{a}\right),\gamma\right)$
\State Simulate $\tilde{N}\sim Poisson\left(\delta\left(1-\sqrt{a}\right)\gamma\right)$
\State Simulate $W_{i}, i=1\dots \tilde{N}$ using Algorithm \ref{alg:WiPiergiacomo}
\State Set $Z_{a} = \sum_{i=0}^{\tilde{N}}W_{i}$
\end{algorithmic}
\end{algorithm}

\begin{algorithm}
\caption{Simulation of $W_{i}, \tilde{N}$}
\label{alg:WiPiergiacomo}
\begin{algorithmic}[1]
\State Simulate $U_{i} \sim U\left(\left[0,1\right]\right)$
\State Compute $Y_{i} = \left(1 + \left(a^{-\frac{1}{2}}-1\right)U_{i}\right)^{2}$
\State Simulate $W_{i}$ from a $\Gamma\left(\frac{1}{2},\frac{1}{2}\gamma^{2}Y_{i}\right)$
\end{algorithmic}
\end{algorithm}

In Table \ref{tab:momentscomparison} we compare theoretical values of the first five moments of $Z_{a}$ against those obtained by MC simulations using Algorithm \ref{alg:ZaSimulation}. We observe that the precision of the algorithms is good for different values of $a\in \left(0,1\right)$. In Figure \ref{fig:distroandScatter} we draw the probability density function of two correlated \rv\ $X,Y \sim IG_{B}\left(\delta,\gamma\right)$ and their scatter plot for two different values of $a$.

\begin{table}

\parbox{.5\linewidth}{
\centering
\begin{tabular}{cccc}
\toprule
$\mathbb{E}\left[Z_{a}^{n}\right]$ & $T$  & $N$\\
\midrule
$\mathbb{E}\left[Z_{a}^{1}\right]$ & 3.00 & 3.00\\
$\mathbb{E}\left[Z_{a}^{2}\right]$ & 10.47 & 10.48\\
$\mathbb{E}\left[Z_{a}^{3}\right]$ & 42.17 & 42.26\\
$\mathbb{E}\left[Z_{a}^{4}\right]$ & 194.72 & 195.49\\
$\mathbb{E}\left[Z_{a}^{5}\right]$ & 1021.84 & 1029.41\\
\bottomrule
\end{tabular}
\subcaption{$a=0.1$}
}
\hfill
\parbox{.5\linewidth}{
\centering
\begin{tabular}{cccc}
\toprule
$\mathbb{E}\left[Z_{a}^{n}\right]$ & $T$ & $N$\\
\midrule
$\mathbb{E}\left[Z_{a}^{1}\right]$ & 1.67 & 1.67\\
$\mathbb{E}\left[Z_{a}^{2}\right]$ & 3.89 & 3.89\\
$\mathbb{E}\left[Z_{a}^{3}\right]$ & 11.91 & 11.90\\
$\mathbb{E}\left[Z_{a}^{4}\right]$ & 45.58 & 45.46\\
$\mathbb{E}\left[Z_{a}^{5}\right]$ & 209.90 & 208.97\\
\bottomrule
\end{tabular}
\subcaption{$a=0.5$}
}
\hfill
\parbox{.5\linewidth}{
\centering
\begin{tabular}{cccc}
\toprule
$\mathbb{E}\left[Z_{a}^{n}\right]$ & $T$ & $N$\\
\midrule
$\mathbb{E}\left[Z_{a}^{1}\right]$ & 1.00 & 1.00\\
$\mathbb{E}\left[Z_{a}^{2}\right]$ & 1.76 & 1.76\\
$\mathbb{E}\left[Z_{a}^{3}\right]$ & 4.56 & 4.59\\
$\mathbb{E}\left[Z_{a}^{4}\right]$ & 15.77 & 15.89\\
$\mathbb{E}\left[Z_{a}^{5}\right]$ & 67.94 & 68.66\\
\bottomrule
\end{tabular}
\subcaption{$a=0.7$}
}
\hfill
\parbox{.5\linewidth}{
\centering
\begin{tabular}{cccc}
\toprule
$\mathbb{E}\left[Z_{a}^{n}\right]$ & $T$  & $N$\\
\midrule

$\mathbb{E}\left[Z_{a}^{1}\right]$ & 0.33 & 0.33\\
$\mathbb{E}\left[Z_{a}^{2}\right]$ & 0.39 & 0.40\\
$\mathbb{E}\left[Z_{a}^{3}\right]$ & 0.85 & 0.86\\
$\mathbb{E}\left[Z_{a}^{4}\right]$ & 2.66 & 2.68\\
$\mathbb{E}\left[Z_{a}^{5}\right]$ & 10.71 & 10.72\\
\bottomrule
\end{tabular}
\subcaption{$a=0.9$}
}
\caption{Moments comparison using $N_{sim} = 10^{6}$ for $\delta = 5$ and $\gamma = 1.5$. $T$ stands for the values of the theoretical n-th moment, whereas $N$ stands for the MC-based estimations.}
\label{tab:momentscomparison}
\end{table}

\begin{figure}
    \centering
    \includegraphics[scale = 0.3]{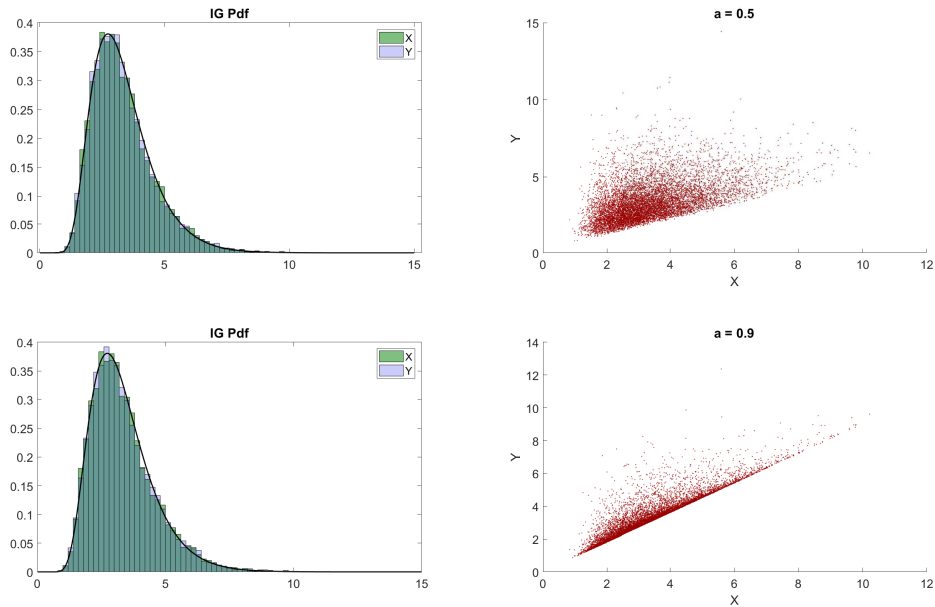}
    \caption{Correlated \rv\ $X$ and $Y$ for $\delta = 5$ and $\gamma = 1.5$ and their scatter plots for $a=0.5$ and $a=0.9$.}
    \label{fig:distroandScatter}
\end{figure}

The proposed algorithm is extremely fast as one can see from results reported in Table \ref{tbl:speed}. This time complexity analysis was implemented on a PC having an Intel Core i5-10210U 2.11 GHz processor.

\begin{table}
\centering
\begin{tabular}{ccccc}
\toprule
$N_{sim}$ & $10^{3}$  & $10^{4}$ & $10^{5}$ & $10^{6}$ \\
\midrule
Time (s) & $1.05 \cdot 10^{-5}$  & $6.54 \cdot 10^{-5}$ & $6.98 \cdot 10^{-4}$ & $7.48 \cdot10^{-3}$  \\
\bottomrule
\end{tabular}
\caption{Average computational time on one hundred runs of Algorithm \ref{alg:WiPiergiacomo} varying the number of simulations.}
\label{tbl:speed}
\end{table}

\vskip .3cm
The simulation of the $a$-reminder of an IG law provides the generation of the joint trajectories of the \sd\ subordinators $H_{1}\left(t\right),H_{2}\left(t\right)$ and therefore of the models presented in Section \ref{sec:SDMODELS}. The application of these MC schemes will be shown in next section.

\section{Financial Application}
\label{sec:NumRes}
In this section we use the bivariate \Levy\ processes illustrated in Section \ref{sec:SDMODELS} to model power and gas forward markets.

Following \citet{CT2003}, we assume that each forward price dynamics is driven by an exponential \Levy\ process based  on $Y_j\left(t\right), j=1, 2$ derived in Section \ref{sec:SDMODELS}. The forward price $F_{j}(t),\; j=1,2$ at time $t$ can be defined as follow:
\begin{equation}
F_{j}\left(t\right) = F_{j}\left(0\right) e^{\omega_{j} t + Y_{j}\left(t\right) }
\label{eqn:themodel}
\end{equation}
where $\omega_{j}$ is the drift correction required for risk-neutral arguments such that
\begin{equation}
\omega_{j} = -\varphi_{j}\left(-i\right)
\label{eqn:omegaexpr}
\end{equation}
where $\varphi_{j}\left(u\right)$ is the characteristic exponent of the process $Y_{j}\left(t\right)$.\\
\par In order to calibrate our model we use the two steps  procedure adopted in \citet{SL2010} and in \citet{gardini2020}: since the marginal distributions do not depend on the parameters used to model the structure of dependence one can firstly fit the marginal parameters on quoted vanilla product and, secondly, dependence ones on forward historical data. The choice of fitting the dependence structure on historical quotations is motivated by the fact that derivative contracts written on more than one underlying asset are extremely illiquid.\\

\noindent  Once calibrated the marginal parameters, we consider spread options written on future prices, which payoff is given by 

\begin{equation*}
\Phi_{T} = \left(F_{1}\left(T\right) - F_{2}\left(T\right) - K\right)^{+}
\end{equation*}
can be priced. It customary to reserve the name  \textit{Cross-Border} or \textit{Spark-Spread} option if the futures are relative to power or gas markets, respectively.
In all experiments we use the MC technique with $N_{sim}=10^{6}$ simulations and the Fourier-based method presented by \citet{caldanafusai2016}. This method provides a good approximation for spread-options prices and it's simpler to implement than the one proposed by \citet{hurd2009}, because it requires only one Fourier inversion. 

The numerical investigation is split into two parts:
in the first one we use \sd-NIG processes to model German and French power forward markets, whereas in the second part we focus on German power and natural gas forward markets.
\par All these markets are very correlated in particular, the German and French power forward markets exhibit an extremely high log-returns correlation. This is due to the structure of the electricity network that connects the two  countries and to the fact that electricity cannot be stored. Therefore, if the price of electricity rises in Germany we can observe an increase  of electricity prices in French as well. The log-returns correlation between German power  and natural gas forward markets is still positive but lower than that of the previous case. This depends on the percentage of installed capacity depending on natural gas (in 2020, 13.9\% in Germany) and, moreover, gas can be stored. For the sake of concision we introduce the following notation:

\begin{itemize}
\item (\emph{SSD - NIG}): \sd-NIG model presented in Section \ref{sec:NIGSemeraro}.
\item (\emph{LSSD - NIG}): \sd-NIG model presented in Section \ref{sec:LSSD}.
\item (\emph{BBSD - NIG}): \sd-NIG model presented in Section \ref{sec:BBSD}.
\end{itemize}



\subsection{Application to German and French Power Markets}
\label{sec:NIGtoCrossBorder}
In order to calibrate the proposed \sd-NIG models we rely upon derivative contracts written on the forward price of each underlying and upon the joint historical time series of forward quotations. The data-set\footnote{Data Source: www.eex.com.} is composed as follow:
\begin{itemize}
\item Forward quotations from 25 April 2017 to 12 November 2018 of Calendar 2019 power forward. Calendar power forward in German and France are stated respectively with DEBY and F7BY.
\item Call Options on power forward 2019 quotations for both countries with settlement date 12 November 2018. We used strikes in a range of $\pm 10\, [EUR/MWh]$ around the settlement price of the forward contract.
\item We assume a risk-free rate $r=0.015$.
\item The historical correlation observed between markets is $\rho_{mkt} = 0.94$.
\end{itemize}

We denote $\left(\theta_{1},\theta_{2}\right)$ parameters related to the French and German power forward markets respectively. Defining the error $\epsilon_{i}$ as
\begin{equation*}
\epsilon_{i} = \frac{C_{i}^{\theta}\left(K,T\right) - C_{i}}{C_{i}},
\end{equation*}
where $C_{i}^{\theta}\left(K,T\right)$ is the value of the $i$-th Call option obtained by the model and $C_{i}$ is its market price, the picture at the bottom of Figure \ref{fig:CrossBorderOption} shows that all models provide a good fit for quoted market options because $\epsilon$  is negligible. In Figure \ref{fig:CrossBorderOption} the picture at the top shows that the \emph{SSD-NIG} model overprices \textit{Cross-Border} options: this is because the fitted model correlation is low, as shown by the value $\rho_{mod}$ in Table \ref{table:depparamsSSDCB}, so one should avoid using this model for pricing. For \emph{LSSD-NIG} model the situation is better but it is not really able to capture the prevailing market correlation. Fortunately \emph{BBSD-NIG} model can replicate the market correlation and then can be used to price \textit{Cross-Border} options. Fitted common parameters are shown in Table \ref{tbl:commoparamsCB}, whereas the dependence parameters for \emph{SSD-NIG}, \emph{LSSD-NIG} and \emph{BBSD-NIG} models are shown in Tables \ref{table:depparamsSSDCB}, \ref{table:depparamsLSSDCB}, \ref{table:depparamsBBSDCB}, respectively. The value of $a$,  is shown in Table \ref{tbl:sdparamsvalCB}. We observe that the parameter $a$ is very close to one, as one should expected. Indeed this result has a very natural economic interpretation: the European electricity network is strongly connected and a price movement in either the German or French market one is propagated without  stochastic delay. Finally, in Table \ref{tbl:CrossBorderoptionpricesFFTvsMC} we compare values of \textit{Cross-Border} options priced using the FFT method proposed by \citet{caldanafusai2016} and the MC scheme we proposed in Section \ref{sec:SimAlg}. Option prices provided by both algorithms are very close and this allows us to use indistinctly FFT or MC method.

\begin{table}[!htb]
\scriptsize

\begin{minipage}{1\linewidth}
\centering
    \begin{tabular}[t]{ccccccc}
    \toprule
    Model & $\mu_{1}$ & $\mu_{2}$ & $\sigma_{1}$ & $\sigma_{2}$ & $\alpha_{1}$ & $\alpha_{2}$  \\ [0.5ex]
    \midrule
\emph{SSD} & 0.64 & 0.40  & 0.31 & 0.32 & 0.02 & 0.03   \\
\emph{LSSD} & 0.64 & 0.40  & 0.31 & 0.32 & 0.02 & 0.03   \\
\emph{BBSD} & 0.64 & 0.40  & 0.31 & 0.32 & 0.02 & 0.03   \\
\bottomrule
\end{tabular}
\caption{Fitted marginal parameters for German and French power markets.}
\label{tbl:commoparamsCB}
\end{minipage}\hfill
\vspace{0.2cm}
\begin{minipage}{.23\linewidth}
\centering

\begin{tabular}[t]{cc}
\toprule
Parameter & Value \\ [0.5ex]
\midrule
$A$ & 40.15  \\
$B$ & 1.00  \\
$a$ & 0.99  \\
$\rho_{mod}$ & 0.05  \\ [1ex]
\bottomrule\end{tabular}
\caption{\emph{SSD}}
\label{table:depparamsSSDCB}
\end{minipage}\hfill
\begin{minipage}{.3\linewidth}
\centering

\smallskip

\begin{tabular}[t]{cc}
\toprule
Parameter & Value \\ [0.5ex]
\midrule
$A$ & 40.15  \\
$B$ & 1.00  \\
$\rho$ & 0.99  \\
$a$ & 0.99  \\
$\rho_{mod}$ & 0.88  \\ [1ex]
\bottomrule
\end{tabular}
\caption{\emph{LSSD}}
\label{table:depparamsLSSDCB}
\end{minipage}\hfill
\begin{minipage}{.4\linewidth}
\centering

\smallskip

\begin{tabular}[t]{cccc}
\toprule
Parameter & Value & Parameter & Value\\ [0.5ex]
\midrule
$\beta_{1}$ & -0.001 & $\beta_{R_{2}}$ & 0.800 \\
$\beta_{2}$ & 0.013  & $\gamma_{R_{1}}$ & 0.448\\
$\gamma_{1}$ & 0.002 & $\gamma_{R_{2}}$ & 0.50  \\
$\gamma_{2}$ & 0.103 & $\nu_{R}$ & 0.025  \\
$\nu_{1}$ & 1.007 & $a$ & 0.99  \\
$\nu_{2}$ & 0.091 & $\rho_{mod}$ & 0.94  \\
$\beta_{R_{1}}$ & 0.554 & &\\ [1ex]
\bottomrule
\end{tabular}
\caption{\emph{BBSD}}
\label{table:depparamsBBSDCB}
\end{minipage}
\end{table}

\begin{table}[ht]
\scriptsize
\centering 
\begin{tabular}{cc}
\toprule
Model & $a$ \\ [0.5ex]
\midrule 
\emph{SSD} & 0.99  \\ 
\emph{LSSD} & 0.99  \\ 
\emph{BBSD} & 0.99  \\ [1ex] 
\bottomrule
\end{tabular}
\caption{Values for the $a$ parameter of the three models.} 
\label{tbl:sdparamsvalCB} 
\end{table}

\begin{figure}
    \centering
    \includegraphics[scale=0.25]{./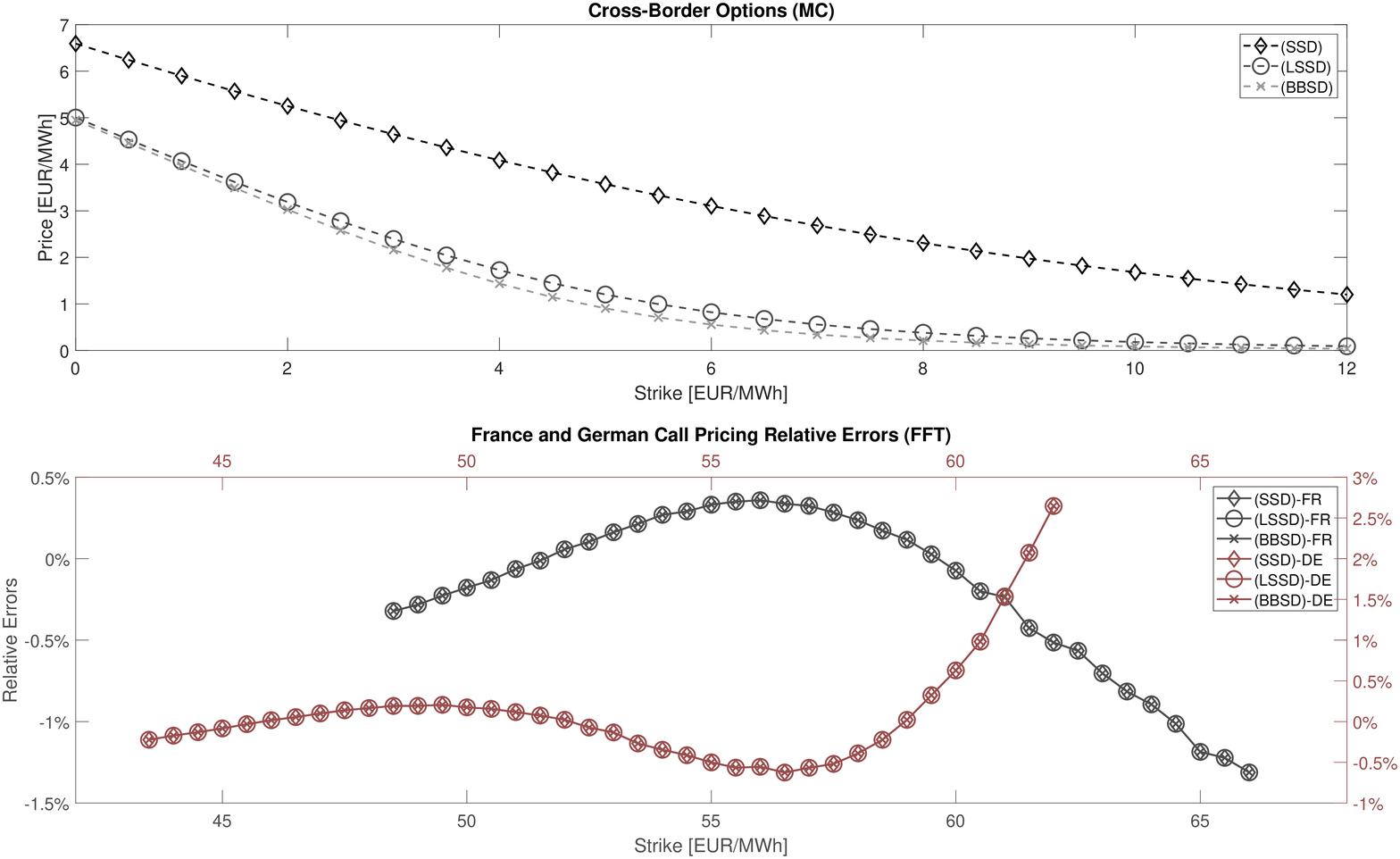}
    \caption{Percentage errors and Cross Border option prices.}
    \label{fig:CrossBorderOption}
\end{figure}

\begin{table}[ht]
{\small
\centering 
\begin{tabular}{c|ccc|ccc|ccc}
\toprule
\multicolumn{1}{c}{K} & \multicolumn{3}{c}{ \emph{SSD-NIG}} & \multicolumn{3}{c}{ \emph{LSSD-NIG}} & \multicolumn{3}{c}{ \emph{BBSD-NIG}} \\
\midrule 
 - & FFT & MC & $\Delta$ & FFT & MC & $\Delta$ & FFT & MC & $\Delta$ \\
\midrule
0.0 & 6.61 & 6.59 & (0.02) & 5.01 &5.00 & (0.01) & 4.95 &4.95 &   (0.00)  \\ 
0.5 & 6.26 &6.24 & (0.02) & 4.54 &4.53 & (0.01) & 4.46 &4.46 &   (0.00) \\ 
1.0 & 5.92 &5.90 & (0.02) & 4.08 &4.07 & (0.01) & 3.98 &3.97 &    (0.01) \\ 
1.5 & 5.59 &5.57 & (0.02) & 3.63 &3.62 & (0.01) & 3.50 &3.49 & (0.01)\\ 
2.0 & 5.27 &5.25 & (0.02) & 3.20 &3.19 & (0.01) & 3.03 &3.03 & (0.00)\\ 
2.5 & 4.96 &4.94 & (0.02) & 2.79 &2.78 & (0.01) & 2.58 &2.58 & (0.00)\\ 
3.0 & 4.67 &4.65 & (0.02) & 2.41 &2.40 & (0.01) & 2.16 &2.16 & (0.00)\\ 
3.5 & 4.38 &4.36 & (0.02) & 2.06 &2.04 & (0.02) & 1.78 &1.78 & (0.00)\\ 
4.0 & 4.11 &4.09 & (0.02) & 1.74 & 1.73 & (0.01) & 1.44 &1.44 & (0.00)\\ 
4.5 & 3.85 &3.82 & (0.03) & 1.46 &1.45 & (0.01) & 1.15 &1.15 & (0.00)\\ 
5.0 & 3.59 &3.57 & (0.02) & 1.22 &1.20 &  (0.02) & 0.91 &0.90 & (0.01)\\ 
5.5 & 3.36 &3.33 & (0.03) & 1.01 &1.00 &  (0.01) & 0.71 &0.71 & (0.00)\\ 
6.0 & 3.13 &3.11 & (0.02) & 0.83 &0.82 &  (0.01) & 0.56 &0.56 & (0.00)\\ 
6.5 & 2.91 &2.89 & (0.02) & 0.69 &0.68 &  (0.01)  & 0.44 &0.44 & (0.00)\\  
7.0 & 2.71 &2.68 & (0.03)& 0.57 &0.56 &  (0.01)  & 0.34 &0.34 & (0.00)\\ 
7.5 & 2.51 &2.49 & (0.02)& 0.47 &0.46 &  (0.01)  & 0.27 &0.27 & (0.00)\\ 
8.0 & 2.33 &2.31 & (0.02)& 0.39 &0.38 &  (0.01)  & 0.21 &0.21 & (0.00)\\ 
8.5 & 2.16 &2.14 & (0.02)& 0.32 &0.32 &  (0.00) & 0.17 &0.17 & (0.00)\\ 
9.0 & 2.00 &1.97 & (0.03)& 0.27 &0.26 &  (0.01) & 0.14 &0.13 & (0.01)\\ 
9.5 & 1.84 &1.82 & (0.02)& 0.22 &0.22 &  (0.00) & 0.11 &0.11 & (0.00)\\ 
10.0 & 1.70 &1.68 & (0.02)& 0.19 &0.18 &  (0.01)  & 0.09 &0.09 & (0.00)\\ 
10.5 & 1.57 &1.55 & (0.02)& 0.16 &0.15 &  (0.01)  & 0.07 &0.07 & (0.00)\\ 
11.0 & 1.44 &1.42 & (0.02)& 0.13 &0.13 &  (0.00) & 0.06 &0.06 & (0.00)\\ 
11.5 & 1.33 &1.31 &(0.02) & 0.11 &0.11 & (0.00) & 0.05 &0.05 & (0.00)\\ 
12.0 & 1.22 &1.20 & (0.02)& 0.09 &0.09 & (0.00) & 0.04 &0.04 & (0.00)\\ 
\bottomrule
\end{tabular}
\caption{Cross Border Option prices comparison between three models. Option prices are obtained using both FFT and MC methods. $\Delta$ is the difference between prices.} 
\label{tbl:CrossBorderoptionpricesFFTvsMC} 
}
\end{table}

\subsection{Application to German Power market and NCG Gas Market}
\label{sec:NIGtoSparkSpread}
In this section we present numerical results obtained applying our models to German power forward market (DE) and to natural gas forward market (NCG). These two markets are positively correlated,  but the log-return correlation is lower that the one between power futures. \\
The data-set\footnote{Data Source: www.eex.com and www.theice.com} we relied upon is the following one:

\begin{itemize}
\item Forward quotations from 1 July 2019 to 09 September 2019 relative to the Month January 2020 for the Power Forward in Germany and  the Gas NCG Forward.
\item Call Options on power forward NCG with settlement date 9 September 2019. As done before, we use strike prices $K$ in a range of $\pm 10\, [EUR/MWh]$ around the settlement price of the forward contract.
\item We assume a risk-free rate $r=0.015$.
\item The historical correlation between log-returns is $\rho_{mkt} = 0.54$.
\end{itemize}
 In the picture at the bottom of Figure \ref{fig:SparkSpreadOption} we observe that all models provide a good fitting of quoted market options because the relative error $\epsilon_{i}$ is small. 
The picture at the top of Figure \ref{fig:SparkSpreadOption} shows that the \emph{SSD-NIG} model overprices the Spark-Spread option due to the fact that fitted model correlation is close to zero, as shown by the value $\rho_{mod}$ in Table \ref{table:depparamsSSDSS}. In contrast, \emph{LSSD-NIG} and \emph{BBSD-NIG} models provide a lower price and catch the right level of market correlation as shown in Tables \ref{table:depparamsLSSDSS},\ref{table:depparamsBBSDSS}. We can conclude that both \emph{LSSD-NIG} and \emph{BBSD-NIG} models can be used to price \textit{Spark-Spread} options.
Table \ref{tbl:commoparams} shows fitted common parameters whereas  dependence parameters for \emph{SSD-NIG}, \emph{LSSD-NIG} and \emph{BBSD-NIG} models are shown in Tables \ref{table:depparamsSSDSS}, \ref{table:depparamsLSSDSS}, \ref{table:depparamsBBSDSS}: the value of $a$, the \sd\ parameter which aims to model the \textit{stochastic delay}, is shown in Table \ref{tbl:sdparamsval}.
The value is still close to one but it is smaller than that estimated for the power forward markets. From the expressions of the linear correlation coefficient reported in equations \eqref{eqn:linCorrS}, \eqref{eqn:linCorrSL} and \eqref{eqn:linCorrBB}, it is easy to see that a change in the value of $a$ has an impact on the value of the correlation coefficient and it is a matter of fact that even a small change in correlation has a high impact on the spread option price. On the other hand, unlike electricity, natural gas can be stored and therefore the impact on on the power market  can be moderated and delayed, for example, using storage contracts or other types of OTC derivatives. If the gas price suddenly rises then it is not rare to observe that electricity price is not immediately effected.

\begin{table}[!htb]
\scriptsize

\begin{minipage}{1\linewidth}
\centering
    \begin{tabular}[t]{ccccccc}
    \toprule
    Model & $\mu_{1}$ & $\mu_{2}$ & $\sigma_{1}$ & $\sigma_{2}$ & $\alpha_{1}$ & $\alpha_{2}$  \\ [0.5ex]
    \midrule
\emph{SSD} & 0.37 & 0.20  & 0.44 & 0.33 & 0.09 & 0.07   \\
\emph{LSSD} & 0.37 & 0.20  & 0.44 & 0.33 & 0.09 & 0.07   \\
\emph{BBSD} & 0.37 & 0.20  & 0.44 & 0.33 & 0.09 & 0.07   \\
\bottomrule
\end{tabular}
\caption{Fitted marginal parameters for German and French power markets.}
\label{tbl:commoparams}
\end{minipage}\hfill
\vspace{0.2cm}
\begin{minipage}{.23\linewidth}
\centering

\begin{tabular}[t]{cc}
\toprule
Parameter & Value \\ [0.5ex]
\midrule
$A$ & 11.27  \\
$B$ & 1.00  \\
$a$ & 0.99  \\
$\rho_{mod}$ & 0.03  \\ [1ex]
\bottomrule\end{tabular}
\caption{\emph{SSD}}
\label{table:depparamsSSDSS}
\end{minipage}\hfill
\begin{minipage}{.3\linewidth}
\centering

\smallskip

\begin{tabular}[t]{cc}
\toprule
Parameter & Value \\ [0.5ex]
\midrule
$A$ & 8.79  \\
$B$ & 1.00  \\
$\rho$ & 0.87  \\
$a$ & 0.90  \\
$\rho_{mod}$ & 0.54  \\ [1ex]
\bottomrule
\end{tabular}
\caption{\emph{LSSD}}
\label{table:depparamsLSSDSS}
\end{minipage}\hfill
\begin{minipage}{.4\linewidth}
\centering

\smallskip

\begin{tabular}[t]{cccc}
\toprule
Parameter & Value & Parameter & Value\\ [0.5ex]
\midrule
$\beta_{1}$ & 0.11 & $\beta_{R_{2}}$ & 0.23 \\
$\beta_{2}$ & 0.09  & $\gamma_{R_{1}}$ & 0.56\\
$\gamma_{1}$ & 0.24 & $\gamma_{R_{2}}$ & 0.50  \\
$\gamma_{2}$ & 0.22 & $\nu_{R}$ & 0.13  \\
$\nu_{1}$ & 0.28 & $a$ & 0.89  \\
$\nu_{2}$ & 0.15 & $\rho_{mod}$ & 0.54  \\
$\beta_{R_{1}}$ & 0.38 & &\\ [1ex]
\bottomrule
\end{tabular}
\caption{\emph{BBSD}}
\label{table:depparamsBBSDSS}
\end{minipage}
\end{table}

\begin{table}[ht]
\scriptsize
\centering 
\begin{tabular}{cc}
\toprule
Model & $a$ \\ [0.5ex]
\midrule 
\emph{SSD} & 0.99  \\ 
\emph{LSSD} & 0.90  \\ 
\emph{BBSD} & 0.89  \\ [1ex] 
\bottomrule
\end{tabular}
\caption{Values for the $a$ parameter of three models.} 
\label{tbl:sdparamsval} 
\end{table}

\begin{figure}
    \centering
    \includegraphics[scale=0.25]{./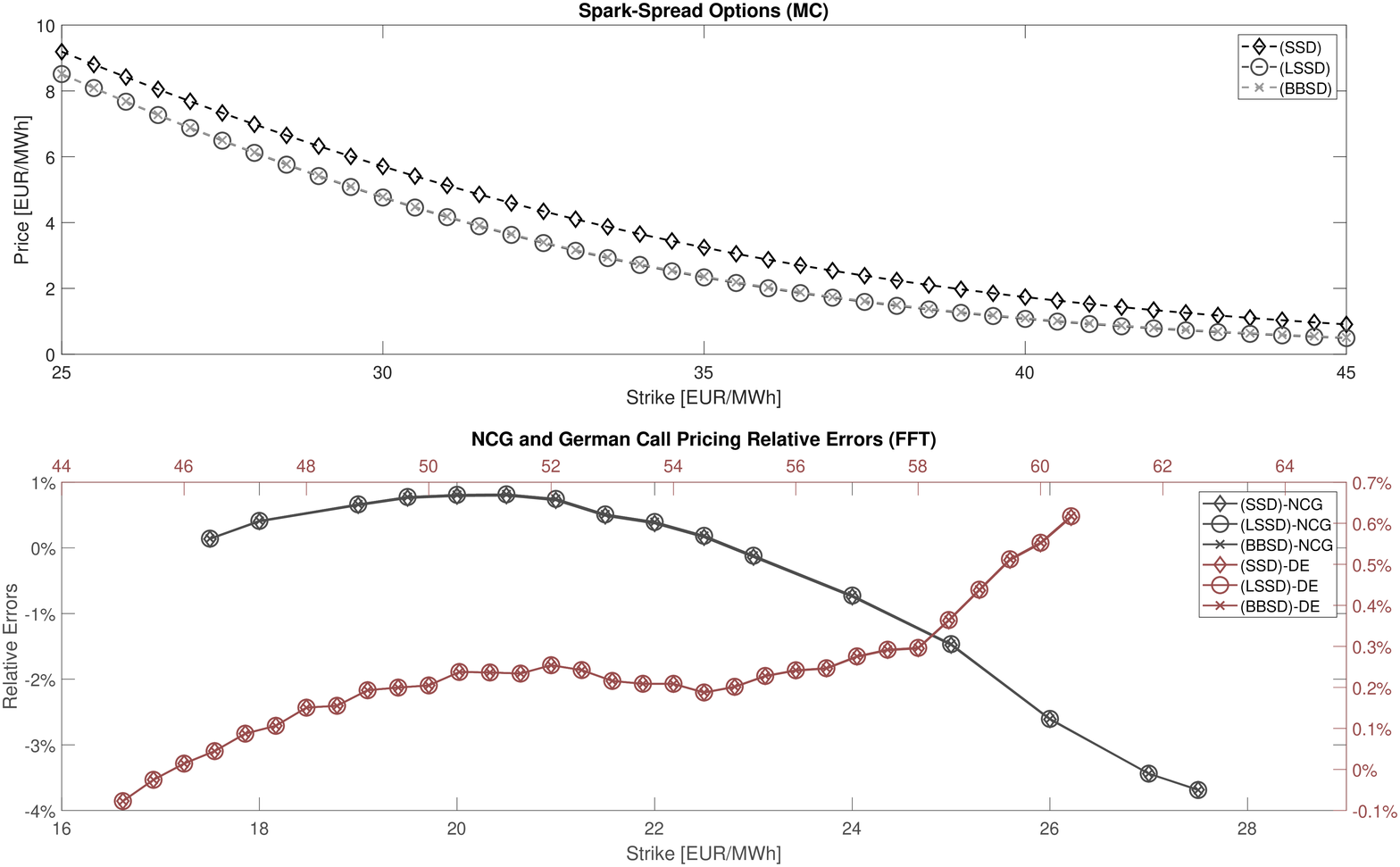}
    \caption{Percentage errors and Spark-Spread option prices.}
    \label{fig:SparkSpreadOption}
\end{figure}

\section{Conclusions}
\label{sec:Conclusions}
Using the concept of self-decomposable subordinators introduced by \citet{gardini2020}, we have shown how some recently proposed multivariated \Levy\ models can be easily extended to include what we called \textit{synaptic risk}. Based on this machinery, we build new bivariate versions of a Normal Inverse Gaussian process aiming at capturing stochastic delays. Their mathematical tractability were preserved and, moreover, we derived closed form expressions for their characteristic functions and linear correlation coefficients. These results were instrumental to apply calibration and derivative pricing methods based on Fourier techniques. 
\par Nevertheless, in many real applications, Monte Carlo simulations are required for complex derivative contracts pricing. Basing on some observations in \citet{Taufer2009} and \citet{cs20} about the strong mathematical connection between self-decomposable laws and \Levy-driven Ornstein-Uhlenbeck processes, we developed a new efficient algorithm to generate the \textit{a}-reminder of Inverse Gaussian law and hence to simulate the desired Normal Inverse Gaussian process with stochastic delays. The just mentioned algorithm is more efficient than the one proposed by \citet{Zhang2008}, because it is not based on acceptance-rejection methods: for this reason it can be adopted for real time simulations and pricing. 
\par Eventually, we applied these results to the modeling of energy markets: using the two-steps calibration technique proposed by \citet{SL2010}, all presented models have been calibrated on vanilla products and on historical quotations and, finally, commonly traded derivative contracts, such as \textit{Cross-Border} or \textit{Spark-Spread} options, have been efficiently priced using both Monte Carlo simulations and the Fourier method proposed by \citet{caldanafusai2016}.
\par In this article, we did not give a complete characterization of the \Levy\ process which can be built starting from the $a$-reminder of a self-decomposable law. For this reason it might be worth deeply investigating mathematical properties of such a process and those of the one obtained subordinating a standard Brownian Motion with it. 
\par It is a well known fact that Inverse Gaussian and Gamma laws are special cases of Generalized Inverse Gaussian laws which are self-decomposable, as was shown by \citet{Halgreen1979}. \citet{Zhang2011TransitionLS} gave a complete characterization of Ornstein-Uhlenbeck processes with Generalize Inverse Gaussian stationary laws: their numerical simulations, achieved by extending our new aforementioned approach, might be the object of a future research. 
\par Many exotic derivatives widely traded in energy markets, such as swing and storage contracts, require Least Squares Monte Carlo approach in order to be valued: time reversal simulations approach presented in \citet{PellegrinoSabino15} and \citet{Sabino20} might be adapted to simulate backward in time above mentioned processes leading to efficient pricing algorithms: therefore, this topic will be the subject of future inquires.

\clearpage
\appendix
\section{IG laws parametrization}
\label{sec:Appendix}
\label{sec:IGlaws}
The characterization of the \pdf\ of an IG law is not unique. 
For example, \citet{CT2003} proposed a parameters setting in $\left(\mu,\lambda\right)$, that we denoted by $IG_{T}\left(\mu,\lambda\right)$ where $\mu>0$ is the mean and $\lambda>0$ is the shape parameter. Within this setting the \pdf\ of an Inverse Gaussian law is given by:

\begin{equation}
f_{Z}\left(x; \mu,\lambda\right) = \left(\frac{\lambda}{2\pi x^3}\right)^{1/2} \exp\left\{-\frac{\lambda\left(x-\mu\right)^2}{2\mu^2 x}\right\}
\label{eqn:IGdensity}
\end{equation}
and its \chf\ is:

\begin{equation}
\phi_{Z}\left(u\right) = \exp\left\{\frac{\lambda}{\mu}\left[1 - \sqrt{1 - \frac{2iu\mu^2}{\lambda}}\right]\right\}
\label{eqn:chfun1}
\end{equation}
Moreover let be $X\sim IG_{T}\left(\mu,\lambda\right)$ then we have that:

\begin{equation*}
\mathbb{E}\left[X\right] = \mu, \quad Var\left[X\right] = \frac{\mu^3}{\lambda}
\end{equation*}


The original parameter setting of a IG law proposed by \citet{BN97} is denoted with $IG_{B}\left(a,b\right)$, where $a$ can is the scale parameter and $b$ represents the shape of the distribution. Its probability density function is given by:

\begin{equation}
f_{Z}\left(x;a,b\right) = \frac{a}{\sqrt{2\pi}} \exp\left(ab\right)x^{-3/2}\exp\left(-\frac{1}{2}\left(a^2x^{-1} + b^2 x\right)\right)
\label{eqn:IGdensity0}
\end{equation}
and the \chf\ has the following form:
\begin{equation}
\phi_{Z}\left(u\right) = \exp\left\{-a\left(\sqrt{-2iu + b^2} - b\right)\right\}
\label{eqn:chfun2}
\end{equation}
If $X\sim IG_{B}\left(a,b\right)$ then we have that:

\begin{equation*}
\mathbb{E}\left[X\right] = \frac{a}{b}, \quad
Var\left[X\right] = \frac{a}{b^{3}}
\end{equation*}

Both parametrizations can be adopted and it is possible to switch from one the other by observing that: 
\begin{align}
\mu & = \frac{a}{b} \\
\lambda & = a^{2}
\label{eqn:equivalentrapp}
\end{align}

We report some very useful properties of the IG law.
\begin{itemize}
\item Let be $X\sim IG_{B}\left(a_{1},b\right)$ and $Y\sim IG_{B}\left(a_{2},b\right)$ and let $X$ and $Y$ be independent. Then:
\begin{equation*}
cX \sim IG_{B}\left(ca_{1},\frac{b}{c}\right),\quad X + Y  \sim IG_{B}\left(a_{1}+a_{2},b\right)
\end{equation*}
\item Let be $X\sim IG_{T}\left(\mu_{0}w_{1},\lambda_{0}w_{1}^{2}\right)$ and $Y\sim IG_{T}\left(\mu_{0}w_{2},\lambda_{0}w_{2}^{2}\right)$ and let $X$ and $Y$ be independent. Then:
\begin{equation*}
cX \sim IG_{T}\left(c\mu_{0}w_{1},c\lambda_{0}w_{1}\right), \quad X + Y  \sim IG_{T}\left(\mu_{0}\left(w_{1}+w_{2}\right),\lambda_{0}\left(w_{1}+w_{2}\right)^{2}\right)
\end{equation*}
\end{itemize}

\clearpage
\bibliographystyle{plainnat}
\bibliography{biblioAll}

\end{document}